\DeclareMathAlphabet{\mathcal}{OMS}{cmsy}{m}{n}
\newcommand{\bfe}[1]{\begin{bfseries}\emph{#1}\end{bfseries}\index{#1}}
\newcommand{\La}{\mbox{$\:\Leftarrow\:$}}
\newcommand{\Ra}{\mbox{$\:\Rightarrow\:$}}
\newcommand{\sse}{\mbox{$\:\subseteq\:$}}
\newcommand{\fa}{\mbox{$\forall$}}
\newcommand{\LL}{\mbox{$\ldots$}}
\newcommand{\C}[1]{\mbox{$\{{#1}\}$}}           %
\newcommand{\NI}{\noindent}
\newcommand{\HB}{\hfill{$\Box$}}
\newcommand{\III}{\vspace{3 mm}}
\newcommand{\II}{\vspace{2 mm}}
\newcommand{\szkew}[1]{\relax \setbox0=\hbox{\kern -24pt $\displaystyle#1$\kern 0pt }%
\box0}
{\catcode`\@=11 \global\let\ifjusthvtest@=\iffalse}
\newcounter{oldmycaption}
\newcommand{\weg}[1]{}
\newcommand{\Agents}{\mathsf{A}}
\newcommand{\Atoms}{\mathsf{P}}
\newcommand{\Atomsb}{\mathsf{Q}}
\newtheorem{theorem}{Theorem}
\newtheorem{lemma}[theorem]{Lemma}
\newtheorem{note}[theorem]{Note}
\newtheorem{definition}[theorem]{Definition}
\newtheorem{corollary}[theorem]{Corollary}
\newtheorem{exam}[theorem]{Example}
\newenvironment{example}{\begin{exam} \rm}{\end{exam}}
\newcommand{\qed}{\hfill$\Box$}
\newenvironment{proof}{\NI {\em Proof.}}{\qed}
\newcommand{\set}[1]{{\{ #1 \}}}
\newcommand{\Model}{\mathcal{M}}
\newcommand{\Calls}{\mathsf{C}}
\newcommand{\Situation}{\mathsf{s}}
\newcommand{\init}{\mathsf{root}}
\newcommand{\Call}{\mathsf{c}}
\newcommand{\Calld}{\mathsf{d}}
\newcommand{\CSequences}{\bm{\Calls}}
\newcommand{\CSequence}{\bm{\Call}}
\newcommand{\CSequenced}{\bm{\Calld}}
\newcommand{\EV}[1]{{\mathsf{EV}}(#1)}
\newcommand{\EVp}[1]{{\mathsf{EPV}}(#1)}
\newcommand{\gs}{*}
\newcommand{\EVVp}{\widetilde{\mathsf EPV}}
\newcommand{\leqs}{\leq_{\mathsf A}}
\newcommand{\leqtwoev}{\leq_{\mathsf{EPV}}}
\title{Common Knowledge in a Logic of Gossips}
\author{Krzysztof R. Apt
	\institute{Centrum Wiskunde \& Informatica\\ Amsterdam, The Netherlands}
	\institute{University of Warsaw\\ Warsaw, Poland}
    \email{k.r.apt@cwi.nl}
	\and
	Dominik Wojtczak
	\institute{University of Liverpool \\
		Liverpool, UK}
	\email{d.wojtczak@liv.ac.uk}
}
\begin{document}
\maketitle

\begin{abstract}
  Gossip protocols aim at arriving, by means of point-to-point or
  group communications, at a situation in which all the agents know
  each other secrets.  Recently a number of authors
  studied distributed epistemic gossip protocols.  These protocols
  use as guards formulas from a simple epistemic logic, which makes
their analysis and verification substantially easier.

  We study here common knowledge in the context of such a logic.
  First, we analyze when it can be reduced to iterated knowledge. Then
  we show that the semantics and truth for formulas without nested
  common knowledge operator are decidable. This implies that
  implementability, partial correctness and termination of distributed
  epistemic gossip protocols that use non-nested common knowledge
  operator is decidable, as well.  Given that common knowledge is
  equivalent to an infinite conjunction of nested knowledge, these
  results are non-trivial generalizations of the corresponding
  decidability results for the original epistemic logic, established
  in \cite{AW16}.
\end{abstract}
\date{}

\section{Introduction}

Common knowledge is a fundamental notion in epistemic reasoning. It
has its origins in the book of the philosopher David Lewis,
\cite{Lewis:1969}, and the article of the sociologist Morris
Frie\-dell, \cite{Friedell}. By now this concept was applied in many
other fields, including artificial intelligence, psychology, computer
science, game theory, and logic.  An early work on this subject in
computer science and logic is discussed in \cite{FHMV_RAK}. For more
recent accounts and surveys see e.g., \cite{DEV09} and \cite{VS14}.

Study and use of various logics equipped with the common knowledge
operator is a rich field.  As example of recent publications let us just mention
\cite{BenEijKoo05:ckiul}, where an update logic augmented with common
knowledge is investigated, and \cite{wangetal.tark:2009}, where the
correctness of epistemic protocols that rely on common knowledge is
studied.

The purpose of this article is to investigate common knowledge in the
context of a simple epistemic logic proposed in \cite{AGH16} to
express and analyze distributed epistemic gossip protocols.  Gossip
protocols aim at arriving, by means of point-to-point or group
communications, at a situation in which all the agents know each other
secrets, see, e.g., the early survey \cite{HHL88} or the book coverage
\cite{HKPRU05}.  Distributed epistemic gossip protocols were
introduced in \cite{ADGH14}, and further studied in
\cite{ADGH14a,herzig_how_2017,AGH16,van_ditmarsch_parameters_2016,cooper_simple_2016,DEPRS17,CHMMR16},
where in particular various distributed gossiping protocols, their types,
epistemic aspects and objectives, and their interpretation as planning problems
were analyzed.  Such protocols are strikingly simple in their syntax
based on epistemic logic (though not semantics), which makes it easier
to reason about them.

In \cite{AW16} we showed that the distributed epistemic gossip
protocols introduced in \cite{AGH16} are implementable
and proved that the problems of partial correctness and termination of
such protocols are decidable, as well.  In \cite{AKW17} we built upon
these results and showed that the implementability of a distributed
epistemic gossip protocol is a
$\text{P}^{\text{NP}}_{\parallel}$-complete problem, while the
problems of its partial correctness and termination are in
$\text{coNP}^{\text{NP}}$. We also established in \cite{AW17} that fair
termination of the distributed epistemic gossip protocols is
decidable, as well.

In this paper we extend the results of
\cite{AW16} to the language that includes the common knowledge
operator. Given that common knowledge is equivalent to an infinite
conjunction of nested knowledge, these results are non-trivial
generalizations of the previous results.

The obtained results clarify when and how common knowledge can arise
in the context of gossiping.  We prove that three or more agents can
have common knowledge only of true statements. This is not the case
for two agents, even if they do not communicate.  We also show that
under some assumptions common knowledge of two agents coincides with
the 4th fold iterated knowledge.  The main open problem is whether in
this context common knowledge can always be reduced to iterated
knowledge.

\section{Syntax}
\label{sec:syntax}

The purpose of this paper is to analyze common knowledge in the
context of gossip protocols.  To describe it we use a simple modal
language introduced in \cite{AGH16}, though we allow now the common
knowledge operator instead of the agent related knowledge operator.

Throughout the paper we assume a fixed finite set $\mathsf{A}$ of at
least three \bfe{agents}. We assume that each agent holds exactly one
\bfe{secret} and that there exists a bijection between the set of
agents and the set of secrets.  We denote by $\mathsf{P}$ the set of
all secrets.  

Assume a fixed ordering on the agents.  Each \bfe{call} concerns two
different agents, say $a$ and $b$, and is written as $ab$ or $(a,b)$, where agent
$a$ precedes agent $b$ in the assumed ordering.

Calls are denoted by $\Call$, $\Calld$.  Abusing notation we write
$a \in \Call$ to denote that agent $a$ is one of the two agents
involved in the call $\Call$ (e.g., for $\Call := ab$ we have
$a \in \Call$ and $b \in \Call$).

We consider formulas in an epistemic language ${\mathcal L}^{ck}$
defined by the following grammar:
\[
\phi ::= F_a p \mid \neg \phi \mid \phi \land \phi \mid C_G \phi,
\]
where $p \in \mathsf{P}$ and $a \in \mathsf{A}$ and
$G \sse \mathsf{A}$.  Each secret is viewed a distinct constant.  We
denote the secret of agent $a$ by $A$, the secret of agent $b$ by $B$,
where $a,b \in \mathsf{A}$ and $A,B \in \mathsf{P}$, and so on.  When
$G$ is a singleton, say $G = \C{a}$, then we write $C_G$ as $K_a$,
which is the knowledge operator used and studied in the context of
this logic in \cite{AGH16} and \cite{AW16}.

We read $F_a p$ as `agent $a$ is familiar with the secret $p$',
$K_a \phi$ as `agent $a$ knows that formula $\phi$ is true', and
$C_G \phi$ as `the group of agents $G$ commonly knows that formula
$\phi$ is true'.

So $F_a p$ is an atomic formula, while $K_a \phi$ and
$C_G \phi$ are compound
formulas.  
In what follows we shall distinguish the following sublanguages of ${\mathcal L}^{ck}$:

\begin{itemize}
\item ${\mathcal L}_{pr}$, its propositional part, which consists of the formulas
that do not use the $C_G$ modalities,

\item ${\mathcal L}_{wn}$,  which consists of the formulas
without the nested use of the $C_G$ modalities.
\end{itemize}

\section{Semantics}
\label{sec:semantics}

We now recall from \cite{AGH16} semantics of the epistemic
formulas. To this end we recall first the concept of a gossip
situation.

A \bfe{gossip situation} (in short a \bfe{situation}) is a sequence
$\Situation = (\Atomsb_a)_{a \in \Agents}$, where
$\Atomsb_a \sse \Atoms$ for each agent $a$.  Intuitively, $\Atomsb_a$
is the set of secrets $a$ is familiar with in situation $\Situation$.
The \bfe{initial gossip situation} is the one in which each
$\Atomsb_a$ equals ${\{A\}}$ and is denoted by $\init$.
We say that an agent $a$
is an \bfe{expert} in a situation $\Situation$ if he is familiar in
$\Situation$ with all the secrets.
Below sets of secrets will be written down as lists. E.g., the set
$\set{A, B, C}$ will be written as $ABC$. Gossip situations will be
written down as lists of lists of secrets separated by dots. E.g., if
there are three agents, then %
the gossip situation
$(\set{A,B}, \set{A,B}, \set{C}$) will be written as $AB.AB.C$.

Each call transforms the current gossip situation by modifying the set
of secrets the agents involved in the call are familiar with. 
Consider a gossip situation $\Situation := (\Atomsb_d)_{d \in \Agents}$.
Then 
$ab(\Situation) := (Q'_d)_{d \in \Agents}$,
where 
$\Atomsb'_a = \Atomsb'_b = \Atomsb_a \cup \Atomsb_b$,
$\Atomsb'_c = \Atomsb_c$, for $c \neq a,b$.
This simply says that the only effect of a call is that the secrets are shared between
the two agents involved in it.

In \cite{AGH16} computations of the gossip protocols were studied, so
both finite and infinite call sequences were used. Here we focus on
the finite call sequences as we are only interested in the semantics
of epistemic formulas.  So to be brief, unless explicitly stated, a
\bfe{call sequence} is assumed to be finite.

The empty sequence is denoted by $\epsilon$.  We use $\CSequence$ to
denote a call sequence and $\CSequences$ to denote the set of all
finite call sequences.  Given call sequences $\CSequence$ and
$\CSequenced$ and a call $\Call$ we denote by $\CSequence.\Call$ the
outcome of adding $\Call$ at the end of the sequence $\CSequence$ and
by $\CSequence.\CSequenced$ the outcome of appending the sequences
$\CSequence$ and $\CSequenced$.  We write
$\CSequence \sqsubseteq \CSequenced$ to denote the fact that $\CSequenced$
extends $\CSequence$, i.e., that for some $\CSequence'$ we have
$\CSequence.\CSequence' = \CSequenced$.

The result of applying a call sequence to a situation $\Situation$ is
defined inductively as follows:

$\epsilon(\Situation) := \Situation$,

$(\Call.\CSequence)(\Situation) := \CSequence(\Call(\Situation))$.

A gossip situation is a set of possible combinations of secret
distributions among the agents.
As calls progress in sequence from the initial situation, agents may
be uncertain about which one of such secrets distributions is the
actual one. This uncertainty is captured by appropriate equivalence
relations on the call sequences.

\begin{definition} \label{def:model}
A \bfe{gossip model} is a tuple $\Model := (\CSequences,
  \set{\sim_a}_{a \in \Agents})$, where each $\sim_a \subseteq
\CSequences \times \CSequences$  is
the smallest relation such that $\epsilon \sim_a \epsilon$ and
the following conditions hold.
Suppose $\CSequence \sim_a \CSequenced$.

  \begin{enumerate}[(i)]
  \item If $a \not\in \Call$, then $\CSequence.\Call \sim_a \CSequenced$
and $\CSequence \sim_a \CSequenced.\Call$.

\item If $a \in \Call$ and $\CSequence.c(\init)_a = \CSequenced.c(\init)_a$,
then $\CSequence.\Call \sim_a \CSequenced.\Call$.

  \end{enumerate}

A gossip model with a designated call sequence is called a
\bfe{pointed gossip model}.
\end{definition}

So for each set of agents there is exactly one gossip model.
To illustrate the definition of $\sim_a$ note 
for instance that by \emph{(i)} we have $ab, bc \sim_a ab, bd$.
But we do not have $bc, ab \sim_a bd, ab$ since
$(bc, ab)(\init)_a = ABC \neq ABD = (bd, ab)(\init)_a$.

To define semantics of the $C_G$ operator
we use the relation
$\sim_G \subseteq \CSequences \times \CSequences$ defined by
\[
\sim_G = (\cup_{a \in G} \sim_a)^{*},
\]
where $^*$ stands for the transitive reflexive closure of a binary relation.
As stated in \cite{AGH16}, each $\sim_a$ is an equivalence relation. As a result
each $\sim_G$ is an equivalence relation, as well.

Finally, we recall the definition of truth.

\begin{definition} \label{def:semantics}%
Let $(\Model, \CSequence)$ be a pointed gossip model with $\Model :=
(\CSequences, (\sim_a)_{a \in \Agents})$ and $\CSequence \in
\CSequences$. We define the satisfaction relation $\models$
inductively as follows. For convenience we also include the special case of $K_a$ (i.e.,
$G_{\{a\}}$). The clauses for Boolean connectives are as usual and omitted.
\begin{eqnarray*}
(\Model, \CSequence) \models F_a p & \mbox{iff} & p \in \CSequence(\init)_a, \\
(\Model, \CSequence) \models K_a \phi &  \mbox{iff}  & \,\forall \CSequenced \mbox{     s.t.     } \CSequence \sim_a \CSequenced, ~(\Model, \CSequenced) \models \phi, \\
(\Model, \CSequence) \models C_G \phi &  \mbox{iff}  & \forall \CSequenced \mbox{     s.t.     } \CSequence \sim_G \CSequenced, ~(\Model, \CSequenced) \models \phi. 
\end{eqnarray*}
Further
\begin{eqnarray*}
\Model \models \phi &  \mbox{iff}  & \,\forall \CSequence \ (\Model, \CSequence) \models \phi. 
\end{eqnarray*}
When $\Model \models \phi$ we say that $\phi$ \bfe{is true}.
\HB
\end{definition}
So a formula $F_a p$ is true whenever secret $p$ belongs to the set of
secrets agent $a$ is familiar with in the situation generated by the
designated call sequence $\CSequence$ applied to the initial situation
$\init$. The knowledge operator $K_a$ is interpreted as customary in
epistemic logic, using the equivalence relations $\sim_a$, and the
$C_G$ operator is defined as in \cite{FHMV_RAK}.

While ${\mathcal L}^{ck}$ is a pretty standard epistemic language, its
semantics is not.  Indeed, it describes the truth of formulas after a
sequence of calls took place, by analyzing the statements of the form
$(\Model, \CSequence) \models \phi$. So we actually study here a
limited version of a dynamic epistemic logic. To put it differently,
we actually consider statements of the form $[\CSequence] \phi$, where
$[\LL]$ is the standard dynamic logic operator, see, e.g.,
\cite{hvdetal.del:2007}. This explains why the study of the logic ${\mathcal L}^{ck}$
cannot be reduced to a study of a routine epistemic logic.

\section{An alternative equivalence relation}
\label{sec:alternative}

To reason about the $\sim_a$ and $\sim_G$ relations it is easier to
use an alternative equivalence relation between the call sequences
that was introduced in \cite{AW16}. It is based on a concept of a
\bfe{view} of agent $a$ of a call sequence $\CSequence$, written as
$\CSequence_a$, and defined by induction as follows.  
\II

\NI
[Base] 
\[
\epsilon_a := \init,
\]

\NI
[Step]
\[
(\CSequence.\Call)_a :=
\begin{cases} 
\CSequence_a \stackrel{\Call}{\longrightarrow} \Situation &\mbox{if } a \in \Call \\
\CSequence_a  & \mbox{otherwise, }
\end{cases}
\]
where the gossip situation $\Situation$ is defined by putting for $d \in \Agents$
\[
\Situation_d := 
\begin{cases} 
\CSequence.\Call(\init)_{d} &\mbox{if } d \in \Call \\
\Situation'_d  & \mbox{otherwise, }
\end{cases}
\]
where $\Situation'$ is the last gossip situation in $\CSequence_a$.

Intuitively, a view of agent $a$ of a call sequence $\CSequence$ is the
information he acquires by means of the calls in $\CSequence$ he is
involved in. It consists of a sequence of gossip situations connected
by the calls in which $a$ is involved in.  After each such call, say $ab$, agent $a$
updates the set of gossips he and $b$  are currently familiar
with. 

\begin{example} \label{exa:view}
\rm
Let $\Agents = \set{a,b,c}$ and consider the call sequence $(ac,bc,ac)$. 
It generates
the following successive gossip situations starting from $\init$:

\[
A.B.C \stackrel{ac}{\longrightarrow} AC.B.AC \stackrel{bc}{\longrightarrow} AC.ABC.ABC
\stackrel{ac}{\longrightarrow} ABC.ABC.ABC.
\]

We now compare it with the view of agent $a$ of the sequence $(ac,bc,ac)$, which is
\[
A.B.C \stackrel{ac}{\longrightarrow} AC.B.AC \stackrel{ac}{\longrightarrow} ABC.B.ABC.
\]

Thus, in the final gossip situation of this view, agent $b$ is familiar with neither 
the secret $A$ nor $C$.  
\HB
\end{example}
  
We now introduce for each agent $a$ an equivalence relation $\equiv_a$
between the call sequences, defined as follows:
\[
\CSequence \equiv_a \CSequenced \textrm{ iff } \CSequence_a = \CSequenced_a.
\]

So according to this definition two call sequences are equivalent for
agent $a$ if his views of them are the same.
Below we shall rely on the following result from \cite{AW16}.

\begin{theorem}[Equivalence] \label{thm:equiv}
   For each agent $a$ the relations $\sim_a$ and $\equiv_a$ coincide.
\end{theorem}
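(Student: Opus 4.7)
The plan is to prove the two inclusions $\sim_a \sse {\equiv_a}$ and $\equiv_a \sse {\sim_a}$ separately. For the first inclusion, I would show that the relation $\equiv_a$ satisfies each of the closure conditions used to inductively generate $\sim_a$; since $\sim_a$ is defined as the smallest such relation, the inclusion follows. The seed condition $\epsilon \equiv_a \epsilon$ is trivial because $\epsilon_a = \init$ on both sides. Closure under rule (i) is immediate from the view definition: when $a \not\in \Call$, $(\CSequence.\Call)_a = \CSequence_a$, so appending or removing a non-$a$ call on either side preserves $\equiv_a$. Closure under rule (ii) reduces to checking that the gossip situation appended in the view step agrees on both sides: for the coordinates $d \not\in \Call$ this is because the previous view (and hence its last situation) is the same, while for $a \in \Call$ it is the explicit hypothesis, and since a call equalizes the secret sets of both participants, the value at the other agent in $\Call$ is handled at the same time.

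For the converse inclusion I would argue by induction on $|\CSequence| + |\CSequenced|$. The base case is immediate. For the step, assume without loss that $\CSequence$ is non-empty and write $\CSequence = \CSequence'.\Call$. If $a \not\in \Call$, then $\CSequence'_a = \CSequence_a = \CSequenced_a$; the induction hypothesis yields $\CSequence' \sim_a \CSequenced$, and rule (i) gives $\CSequence \sim_a \CSequenced$. If $a \in \Call$, then $\CSequence_a$ ends with the $\Call$-step, so $\CSequenced_a$ does too; thus $\CSequenced$ must contain at least one $a$-call, and writing $\CSequenced = \CSequenced'.\Call.\CSequenced''$ with $\CSequenced''$ composed only of non-$a$ calls, repeated use of rule (i) yields $\CSequenced \sim_a \CSequenced'.\Call$. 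Stripping the final $\Call$-step from the two equal views gives $\CSequence'_a = \CSequenced'_a$, while reading off the last situation of the view delivers $\CSequence'.\Call(\init)_a = \CSequenced'.\Call(\init)_a$. So by the induction hypothesis $\CSequence' \sim_a \CSequenced'$, and rule (ii) supplies $\CSequence'.\Call \sim_a \CSequenced'.\Call$; combining with the earlier $\sim_a$-steps via symmetry and transitivity (already established in \cite{AGH16}) concludes $\CSequence \sim_a \CSequenced$.

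The delicate point that requires the most care is matching the gossip situation appended by the view operator in both directions of the argument. In the forward direction one must be sure that the condition imposed by rule (ii), namely equality of $\CSequence.\Call(\init)_a$ and $\CSequenced.\Call(\init)_a$, together with equality of the preceding views, actually forces the two freshly appended situations $\Situation$ and $\Situationt$ to agree coordinate by coordinate; this hinges on the case split in the view definition between $d \in \Call$ and $d \not\in \Call$. In the converse direction one must extract the same equality from equality of the full views. Once this bookkeeping is set up cleanly the rest reduces to routine manipulation of prefixes and applications of rules (i) and (ii).
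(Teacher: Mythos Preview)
Your argument is correct. The two inclusions are set up in the natural way, and the only point that needs checking carefully---that in rule (ii) the freshly appended situation agrees at \emph{both} agents of the call, not just at $a$---is handled by your observation that a call equalises the secret sets of its two participants, so equality at $a$ forces equality at the other endpoint. The converse induction on $|\CSequence|+|\CSequenced|$ is also sound; the decomposition $\CSequenced = \CSequenced'.\Call.\CSequenced''$ with $\CSequenced''$ free of $a$-calls is exactly what the equality of views forces, and the measure strictly decreases in both subcases.

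As for comparison with the paper: note that the paper does not actually prove this theorem. It is imported as a result from \cite{AW16} and stated here without proof (``Below we shall rely on the following result from \cite{AW16}''). So there is no in-paper argument to compare against; you have supplied a self-contained proof where the paper simply cites one. Your approach---verifying that $\equiv_a$ satisfies the inductive closure conditions for one inclusion, and inducting on total length for the other---is the standard way to show that a concrete relation coincides with one defined as a least fixpoint, and is almost certainly close in spirit to what \cite{AW16} does.
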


\section{Semantic matters}

\subsection{General considerations}

We shall need below an alternative definition of truth of the
$C_G \phi$ formulas.
Given a sequence $a_1, \LL, a_k$ of elements of $G$ we abbreviate
$K_{a_1} \LL K_{a_k} \phi$ to $K_{a_1 \LL a_k} \phi$.  We also denote
the set of finite sequences of elements of $G$ by $G^*$.

\begin{note}[\cite{FHMV_RAK}] \label{not:G}
  For all call sequences $\CSequence$ and formulas $C_G \phi \in {\mathcal L}^{ck}$
\[
\mbox{$(\Model, \CSequence) \models C_G \phi$ iff for all $t \in G^*$ 
$(\Model, \CSequence) \models K_t \phi$.}
\]
\end{note}

In other words, the formula $C_G \phi$ is equivalent to the infinite conjunction
$\bigwedge_{t \in G^*} K_t \phi$.
We shall also need the following generalization
of the corresponding result from \cite{AW16} to the logic here studied.

\begin{theorem}[Monotonicity]\label{thm:monog}
  Suppose that $\phi \in {\mathcal L}^{ck}$ is a formula that does not
  contain the $\neg$ symbol.  Then
\[
\mbox{$\CSequence \sqsubseteq \CSequenced$ and $(\Model, \CSequence) \models \phi$
implies $(\Model, \CSequenced) \models \phi$.}
\]
\end{theorem}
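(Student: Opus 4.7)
I plan to proceed by structural induction on the negation-free formula $\phi$. For the base case $\phi = F_a p$, each call only adds secrets to the two agents involved, so a trivial induction on the length of the suffix $\CSequence'$ with $\CSequence.\CSequence' = \CSequenced$ gives $\CSequence(\init)_a \sse \CSequenced(\init)_a$, from which $(\Model, \CSequenced) \models F_a p$ follows. The conjunction case is immediate from the inductive hypothesis applied to each conjunct. The case $\phi = \neg \psi$ does not arise by assumption.

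The main case is $\phi = C_G \psi$. Assume $(\Model, \CSequence) \models C_G \psi$ and $\CSequence \sqsubseteq \CSequenced$, and fix any $\CSequenced'$ with $\CSequenced \sim_G \CSequenced'$. My strategy is to produce a prefix $\CSequence' \sqsubseteq \CSequenced'$ with $\CSequence \sim_G \CSequence'$. Once this is done, the hypothesis on $\CSequence$ yields $(\Model, \CSequence') \models \psi$, and the structural inductive hypothesis applied to the proper subformula $\psi$ and the pair $\CSequence' \sqsubseteq \CSequenced'$ gives $(\Model, \CSequenced') \models \psi$, as required. To produce $\CSequence'$ I would prove an auxiliary ``prefix recovery'' claim by induction on the length $k$ of a chain $\CSequenced = \CSequenced^0 \sim_{a_1} \CSequenced^1 \sim_{a_2} \cdots \sim_{a_k} \CSequenced^k = \CSequenced'$ with each $a_i \in G$: there exist $\CSequence = \CSequence^0, \CSequence^1, \ldots, \CSequence^k$ with $\CSequence^i \sqsubseteq \CSequenced^i$ and $\CSequence^{i-1} \sim_{a_i} \CSequence^i$ for all $i$. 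Taking $\CSequence' := \CSequence^k$ then gives $\CSequence \sim_G \CSequence'$ by definition of $\sim_G$. The inductive step of this auxiliary claim reduces the whole question to a single-agent version of prefix recovery.

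The single-agent step is where Theorem \ref{thm:equiv} enters: given $\CSequence \sqsubseteq \CSequenced$ and $\CSequenced \sim_a \CSequenced''$, I must build $\CSequence'' \sqsubseteq \CSequenced''$ with $\CSequence \sim_a \CSequence''$. Since the view construction appends a transition exactly when $a$ participates in the new call and otherwise leaves the view unchanged, $\CSequence_a$ is a prefix of $\CSequenced_a$; combined with $\CSequenced_a = \CSequenced''_a$ (given by Theorem \ref{thm:equiv}), $\CSequence_a$ is also a prefix of $\CSequenced''_a$. Letting $n$ denote the number of calls in $\CSequence$ involving $a$, I take $\CSequence''$ to be the shortest prefix of $\CSequenced''$ containing exactly $n$ calls involving $a$ (the empty prefix if $n = 0$). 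By construction $\CSequence''_a$ equals the prefix of $\CSequenced''_a$ after $n$ transitions, which is exactly $\CSequence_a$, so another appeal to Theorem \ref{thm:equiv} yields $\CSequence \sim_a \CSequence''$.

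The main obstacle is conceptual: $\sim_G$ is a transitive closure, so one cannot directly shorten an arbitrary $\sim_G$-related pair in a single step, and the semantics of $C_G$ quantifies universally over the entire equivalence class. Isolating the single-agent prefix-recovery lemma via Theorem \ref{thm:equiv} and then propagating it along the chain is the pivotal move; after that the structural induction on $\phi$ runs cleanly.
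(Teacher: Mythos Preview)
Your argument is correct, but it takes a different route from the paper's. The paper's proof is a two-line reduction: by Note~\ref{not:G}, $C_G\psi$ is semantically equivalent to the infinite conjunction $\bigwedge_{t\in G^*} K_t\psi$, and since each $K_t\psi$ is (after recursively unfolding any inner $C_G$'s the same way) a negation-free formula in the $K$-only language, Monotonicity Theorem~4 of \cite{AW16} applies to every conjunct; monotonicity then passes through the (possibly infinite) conjunction. You instead work directly with the semantics of $C_G$ via $\sim_G$: you isolate a single-agent prefix-recovery lemma (given $\CSequence\sqsubseteq\CSequenced$ and $\CSequenced\sim_a\CSequenced''$, find $\CSequence''\sqsubseteq\CSequenced''$ with $\CSequence\sim_a\CSequence''$), prove it using the view characterization from Theorem~\ref{thm:equiv}, and then propagate it along a $\sim_G$-chain. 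What your approach buys is self-containment---you never leave the present paper, and in effect you re-derive the key step behind the $K_a$ case of \cite{AW16}'s monotonicity proof. What the paper's approach buys is brevity: once Note~\ref{not:G} is available, the $C_G$ case collapses to the already-established $K$-fragment result with no new work.
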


\begin{proof}
  By Note \ref{not:G} and Monotonicity Theorem 4 of \cite{AW16}.
\end{proof}

Let us focus now on the case of $\geq 3$ agents. The following result holds.

\begin{theorem} \label{thm:3agents}
Suppose that $|G| \geq 3$. Then for all call sequences $\CSequence$ and formulas $\phi \in {\mathcal L}^{ck}$
\[
  (\Model, \CSequence) \models C_G \phi \mbox{ iff } \Model \models
  \phi.
\]
\end{theorem}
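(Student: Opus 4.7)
The plan is to reduce the theorem to the claim that, whenever $|G| \geq 3$, the equivalence relation $\sim_G$ relates any two call sequences, i.e.\ $\sim_G = \CSequences \times \CSequences$. Once this is established, both directions follow at once: for the ``only if'' direction, $(\Model,\CSequence) \models C_G \phi$ gives $(\Model,\CSequenced)\models \phi$ for every $\CSequenced$ with $\CSequence \sim_G \CSequenced$, and since this covers all call sequences, $\Model \models \phi$; the ``if'' direction is immediate from the semantic clause for $C_G$.

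To prove the claim, since $\sim_G$ is an equivalence relation it suffices to show that every $\CSequence \in \CSequences$ satisfies $\CSequence \sim_G \epsilon$. I would proceed by induction on the length of $\CSequence$. The base case is trivial because $\epsilon \sim_a \epsilon$ for every $a$. For the inductive step, assume $\CSequence \sim_G \epsilon$ and consider $\CSequence.\Call$. The crucial combinatorial observation is that $\Call$ involves exactly two agents, whereas $|G| \geq 3$, so there exists some $a \in G$ with $a \notin \Call$.

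Now I would apply clause (i) of Definition~\ref{def:model}: since $\sim_a$ is an equivalence relation (hence reflexive), $\CSequence \sim_a \CSequence$, and combining this with $a \notin \Call$ yields $\CSequence.\Call \sim_a \CSequence$. Since $a \in G$, this gives $\CSequence.\Call \sim_G \CSequence$, and chaining with the inductive hypothesis through transitivity of $\sim_G$ produces $\CSequence.\Call \sim_G \epsilon$, completing the induction.

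There is no real obstacle here: the whole argument rests on the pigeonhole observation that any binary call leaves at least one member of a three-element group free to ``ignore'' it via clause (i). The only mild subtlety is making sure one is allowed to invoke reflexivity of $\sim_a$, which is noted in the text right after the definition of $\sim_G$ (and could alternatively be derived from the Equivalence Theorem, since $\CSequence_a = \CSequence_a$ trivially gives $\CSequence \equiv_a \CSequence$).
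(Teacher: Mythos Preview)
Your proposal is correct and follows essentially the same approach as the paper: both reduce the theorem to showing that $\sim_G$ is the total relation on $\CSequences$ by proving $\CSequence \sim_G \epsilon$ via induction on the length of $\CSequence$, using the pigeonhole observation that $|G|\geq 3$ guarantees an agent $a\in G$ outside any given call. Your write-up is slightly more explicit about invoking reflexivity of $\sim_a$ before applying clause (i), but otherwise the arguments coincide.
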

\begin{proof}
  First we prove that for all $\CSequence$ and $\CSequenced$
  \begin{equation}
    \label{equ:iff1}
\CSequence \sim_G \CSequenced.    
  \end{equation}

By the transitivity of $\sim_G$ it suffices to prove that
$\CSequence \sim_G \epsilon$. We prove it by induction on the
length of $\CSequence$. By definition $\epsilon \sim_G \epsilon$. 
Suppose that for some $\CSequence$ we have
$\CSequence \sim_G \epsilon$ and consider a call $\Call$.  Take
$a \in G$ such that $a \not\in \Call$ (it exists since $|G| \geq
3$). Then $\CSequence.\Call \sim_a \CSequence$, so
$\CSequence.\Call \sim_G \epsilon$.

By (\ref{equ:iff1}) we have $(\Model, \CSequence) \models C_G \phi$
iff $\fa \CSequenced \: (\Model, \CSequenced) \models \phi$, which
concludes the proof.
\end{proof}

Theorem \ref{thm:3agents} states that the formulas commonly known by
the agents in a group of at least three agents are precisely the true
formulas.  An example of such a statement is that each agent is
familiar with his secret, i.e., $\bigwedge_{a \in \mathsf{A}} F_a
A$. In contrast, a statement that an agent is familiar with the secret
of another agent, i.e. $F_a B$, where $a \neq b$, is not always true,
so for all call sequences $\CSequence$ we have
$(\Model, \CSequence) \not \models C_G F_a B$, when $|G| \geq 3$.

\subsection{The case of two agents}

The situation changes when the group consists of two agents.  In what follows we 
abbreviate $C_{\{a,b\}}$ to $C_{ab}$.  

\begin{example} \label{exa:2agents}
\rm

\mbox{}

\NI
$(i)$ Consider the formula
\[
\phi := \neg F_a B \lor \bigvee_{c \in \Agents \setminus \{a, b\}} F_c B.
\]
It states that if $a$ is familiar with the secret of $b$, then also another agent
different from $a$ and $b$ is familiar with this secret.
Note that $(\Model, ab) \models \neg \phi$, i.e., $\phi$ is not always true.
We claim that $(\Model, \epsilon) \models  C_{ab} \phi$.

First note that if $\CSequence \sim_a \CSequenced$ or
$\CSequence \sim_b \CSequenced$ and the call $ab$ does not appear in
$\CSequence$, then it does not appear in $\CSequenced$
either. Consequently, if $\CSequence \sim_{\{a,b\}} \epsilon$, then the call
$ab$ does not appear in $\CSequence$.

Conversely, take a call sequence $\CSequence$ such that the call $ab$
does not appear in it. We prove by induction on the length of
$\CSequence$ that $\CSequence \sim_{\{a,b\}} \epsilon$. By definition
$\epsilon \sim_{\{a,b\}} \epsilon$.  Suppose that for some $\CSequence$ we
have $\CSequence \sim_{\{a,b\}} \epsilon$ and consider a call $\Call$.  Either
$a \not \in \Call$ or $b \not \in \Call$, so either
$\CSequence.\Call \sim_a \CSequence$ or
$\CSequence.\Call \sim_b \CSequence$. Consequently
$\CSequence.\Call \sim_{\{a,b\}} \epsilon$.

We conclude that $\CSequence \sim_{\{a,b\}} \epsilon$ iff the call $ab$ does
not appear in $\CSequence$. But for any such $\CSequence$ we have
$(\Model, \CSequence) \models \phi$. This proves that
$(\Model, \epsilon) \models  C_{ab} \phi$.

This shows that even without any call two agents can commonly know a
formula that is not always true.
\II

\NI
$(ii)$
We also have $(\Model, ab) \models C_{ab} (F_a B \land F_b A)$,
i.e., two agents can commonly know some non-trivial information about
themselves.

Indeed, if the call $ab$ appears in $\CSequence$, then
$(\Model, \CSequence) \models F_a B \land F_b A$ and
that if the call $ab$ appears in $\CSequence$ and
$\CSequence \sim_a \CSequenced$ or $\CSequence \sim_b \CSequenced$,
then it also appears in $\CSequenced$.  
\HB
\end{example}

On the other hand for two agents the following partial analogue of 
Theorem \ref{thm:3agents} holds.

\begin{theorem} \label{thm:2} For all call sequences $\CSequence$ that
  do not contain the call $ab$ and all formulas
  $\phi \in {\mathcal L}^{ck}$ that do not contain the $\neg$ symbol
\[
(\Model, \CSequence) \models C_{ab} \phi \mbox{ iff } \Model \models \phi.
\]
\end{theorem}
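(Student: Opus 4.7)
My plan is to reduce the statement to two already available ingredients: the characterization of the $\sim_{\{a,b\}}$-equivalence class of $\epsilon$ derived in Example \ref{exa:2agents}$(i)$, and Monotonicity Theorem \ref{thm:monog} for $\neg$-free formulas. The direction $\Leftarrow$ is essentially immediate from Definition \ref{def:semantics}: if $\Model \models \phi$, then $(\Model, \CSequenced) \models \phi$ for every call sequence $\CSequenced$, and a fortiori for every $\CSequenced$ with $\CSequence \sim_{\{a,b\}} \CSequenced$, hence $(\Model, \CSequence) \models C_{ab} \phi$. No assumption about $\CSequence$ or the absence of $\neg$ is needed here.

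For the direction $\Rightarrow$, I first plan to invoke the auxiliary fact proved inside Example \ref{exa:2agents}$(i)$, namely that a call sequence $\CSequenced$ satisfies $\CSequenced \sim_{\{a,b\}} \epsilon$ if and only if the call $ab$ does not appear in $\CSequenced$. Since by assumption $\CSequence$ does not contain $ab$, we have $\CSequence \sim_{\{a,b\}} \epsilon$, and therefore from $(\Model, \CSequence) \models C_{ab} \phi$ and Definition \ref{def:semantics} we obtain $(\Model, \epsilon) \models \phi$.

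Now I would extend this single value at $\epsilon$ to all call sequences using monotonicity. Because $\phi$ contains no $\neg$ symbol and $\epsilon \sqsubseteq \CSequenced$ trivially holds for every call sequence $\CSequenced$, Theorem \ref{thm:monog} yields $(\Model, \CSequenced) \models \phi$ for every $\CSequenced$, which by Definition \ref{def:semantics} is exactly $\Model \models \phi$.

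Honestly, there is no real obstacle: the whole proof is the observation that under the two hypotheses ($\CSequence$ avoids $ab$, and $\phi$ is $\neg$-free) the commonly-known formula is forced to be already true at $\epsilon$ and then propagates everywhere by monotonicity. The only thing to be careful about is citing the characterization from Example \ref{exa:2agents}$(i)$ rather than re-proving it, and making explicit that both hypotheses of Theorem \ref{thm:monog} are met when invoking it.
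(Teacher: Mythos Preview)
Your proposal is correct and essentially identical to the paper's own proof: both directions are handled the same way, using Example \ref{exa:2agents}$(i)$ to conclude $\CSequence \sim_{\{a,b\}} \epsilon$ and hence $(\Model, \epsilon) \models \phi$, then invoking the Monotonicity Theorem \ref{thm:monog} to extend to all call sequences, while the converse is immediate from the semantics.
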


\begin{proof}
  Suppose $(\Model, \CSequence) \models C_{ab} \phi$. As noticed in
  Example \ref{exa:2agents}$(i)$ $\CSequence \sim_{\{a,b\}} \epsilon$, so
  $(\Model, \epsilon) \models \phi$.  By the Monotonicity Theorem
  \ref{thm:monog} for all call sequences $\CSequenced$ we have
  $(\Model, \CSequenced) \models \phi$. So $\models \phi$.
  Further, $\models \phi$ implies
  $(\Model, \CSequence) \models C_{ab} \phi$ for arbitrary call sequences
  $\CSequence$ and formulas $C_{ab} \phi$.
\end{proof}

Example \ref{exa:2agents}$(ii)$ shows that the restriction that
the call $ab$ does not appear in $\CSequence$ cannot be dropped and
Example \ref{exa:2agents}$(i)$ shows that the claim does not hold
for formulas that do contain the $\neg$ symbol.

Next, we show that for formulas that do not contain the $\neg$ symbol common knowledge
for the group of two agents coincides with the 4th fold iterated knowledge.

Consider an agent $a$ and a call sequence $\CSequence$.  We say that a
call is \bfe{$a$-irrelevant} in $\CSequence$ if its removal does not
affect the view (in the sense of Section \ref{sec:alternative})
of agent $a$ of the call sequence.  Starting from
$\CSequence$ we repeatedly remove from the current call sequence the
first not yet analyzed call if it is $a$-irrelevant and otherwise we keep
it. We call the outcome of such an iteration the
\bfe{$a$-simplification} of $\CSequence$.

\begin{example} \label{exa:sim}
\rm
Suppose, 
\[
\CSequence = bf. cd. bc. ce. df. ef. bh. \underline{af}. bg. \underline{ag}. \underline{ah},
\] 
where for the visibility we underlined the $a$-calls.
Then the $a$-simpli\-fication of $\CSequence$ results in the deletion of the calls $bf$ and $cd$ and equals
\[
bc. ce. df. ef. bh. \underline{af}. bg. \underline{ag}. \underline{ah}.
\]

The views of agent $a$ of both call sequences are as follows.
\II

$A.B.C.D.E.F.D.H \stackrel{af}{\longrightarrow}$

$ABCDEF.B.C.D.E.ABCDEF.G.H \stackrel{ag}{\longrightarrow}$

$ABCDEFGH.B.C.D.E.ABCDEF.ABCDEFGH.H \stackrel{ah}{\longrightarrow}$

$ABCDEFGH.B.C.D.E.ABCDEF.ABCDEFGH.ABCDEFGH$

\HB
\end{example}

Below we say that two calls are \bfe{linked} if exactly one agent
participates in both of them.  Consider now a call sequence
$\CSequence$ with no $a$-irrelevant calls that does not contain the
call $ab$.  We focus on the $b$-calls in $\CSequence$. By the
assumption about $\CSequence$ for each $b$-call $\Call$ in
$\CSequence$
there is a sequence of calls $\Call_1, \LL, \Call_k$ in $\CSequence$
such that

\begin{itemize}
\item $\Call = \Call_1$, 

\item $b \in \Call_1$,

\item for $i \in \{1, \LL, k-1\}$ $a \not \in \Call_i$, 

\item for $i \in \{1, \LL, k-1\}$ the calls $\Call_i$ and $\Call_{i+1}$ are
linked.

\item $a \in \Call_k$.

\end{itemize}

We say then that $\Call_1$ \bfe{leads to} $\Call_k$.  Further, we call
each $b$-call in $\CSequence$ that leads to the earliest possible
$a$-call in $\CSequence$ \bfe{$b$-essential for $a$} and call each
other $b$-call in $\CSequence$ \bfe{$b$-inessential for $a$}.
Intuitively, agent $a$ learns the secret of $b$ only through the
$b$-essential calls. In contrast, he can learn other secrets both
through the $b$-essential and the $b$-inessential calls.

\begin{example}
\rm  

Consider the call sequence
$$bc. ce. df. ef. bh. \underline{af}. bg. \underline{ag}. \underline{ah}$$
from Example
\ref{exa:sim}. The only $b$-essential call for $a$ is $bc$ as it leads
to $af$.
In turn, in the call sequence
$bh. ce. df. ef. \underline{af}. bg. bc. \underline{ag}$ the
$b$-essential calls for $a$ are $bh$ and $bg$ as both of them lead to
$ag$ and no $b$-call leads to the earlier $a$-call $af$.  
\HB
\end{example}

Consider now the first $b$-call in $\CSequence$, call it $bc$, that is
$b$-inessential for $a$ and suppose that $ad$ is the first $a$-call in
$\CSequence$ to which $bc$ leads to, where $c$ and $d$ may coincide.
If $c = d$, then we delete $bc$ from $\CSequence$ and otherwise we
replace $bc$ by $cd$. Intuitively, we `reroute' the information
collected by agent $b$ in the $b$-inessential calls leading to $ad$
using agent $d$. This does not affect agent's $a$ view of the call
sequence since he does not learn the secret of $b$ through the
$b$-inessential calls.

We repeat this operation, starting with $\CSequence$, for all
$b$-calls that are $b$-inessential for $a$ and denote the resulting
call sequence by $R_{ab}(\CSequence)$.

\begin{example}
 \rm

 Consider the call sequence
 $$bc. ce. df. ef. bh. \underline{af}. bg. \underline{ag}. \underline{ah}$$
 from Example
 \ref{exa:sim}. The $b$-inessential calls are $bh$ and $bg$ and both
 lead to $ag$.  So
\[
R_{ab}(bc. ce. df. ef. bh. \underline{af}. bg. \underline{ag}. \underline{ah}) = bc.
 ce. df. ef. gh. \underline{af}. \underline{ag}. \underline{ah}.
\]
\HB
\end{example}

The following lemma establishes the relevant property of $R_{ab}(\CSequence)$.

\begin{lemma} \label{lem:Ra}
  Consider a call sequence $\CSequence$ with no $a$-irrelevant calls
  that does not contain the call $ab$. Then
  $\CSequence \equiv_a R_{ab}(\CSequence)$.
\end{lemma}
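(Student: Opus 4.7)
The plan is to reduce the lemma to a sequence of single-call modifications and establish $\equiv_a$ for each, then chain them. Set $\CSequence_0 := \CSequence$, and inductively let $\CSequence_{i+1}$ be obtained from $\CSequence_i$ by applying the single rerouting step to the $(i+1)$-th $b$-inessential call (in the original left-to-right ordering), so that $\CSequence_n = R_{ab}(\CSequence)$ after $n$ steps. By transitivity of $\equiv_a$, it suffices to prove the single-step claim: if $\CSequence'$ is obtained from $\CSequence$ by deleting the leftmost $b$-inessential call $bc$ (when $c = d$) or by replacing it with $cd$ (when $c \neq d$), where $ad$ is the earliest $a$-call to which $bc$ leads, then $\CSequence \equiv_a \CSequence'$.

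For the single-step claim, I would show directly that the views $\CSequence_a$ and $\CSequence'_a$ coincide. Since the rerouting preserves the positions of all $a$-calls, both views have the same shape, and they agree on the prefix strictly before $bc$, since the sequences are identical there and so are the induced gossip situations. The critical invariant to maintain on the tail is: at every $a$-call $ae$ occurring at or after the position of $bc$, the union of $a$'s and $e$'s familiar sets just before $ae$ is the same in $\CSequence$ and $\CSequence'$. Once this invariant holds, $a$'s view is updated to the same value at each of $a$'s calls, whence $\CSequence_a = \CSequence'_a$.

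The invariant rests on the $b$-essentiality structure. In $\CSequence$ the call $bc$ injects into $c$, and onwards along the ``leads-to'' chain, the secret $B$ together with anything $b$ has previously accumulated; in $\CSequence'$ the call $cd$ injects $d$'s current secrets instead. Since $bc$ is $b$-inessential, every secret $b$ holds at time $bc$ reaches $a$ via the $b$-essential chain, whose terminal $a$-call occurs strictly before $ad$; hence by the time of $ad$, agent $a$ already possesses these secrets in both sequences. Dually, the secrets that are prematurely injected in $\CSequence'$ are in any event collected by $a$ at $ad$ in $\CSequence$. Consequently, at every $a$-call $ae$ in the tail, the discrepancy between $\CSequence$ and $\CSequence'$ at the partner agent $e$ lies within $a$'s current familiar set in $\CSequence$, and unioning with $a$'s set absorbs the discrepancy. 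The main obstacle will be formalising this controlled propagation along the linked chain from $bc$ (respectively $cd$) to $ad$, which I would do via a secondary induction on positions, distinguishing calls that participate in the chain from those that do not, and verifying that the symmetric difference between the two evolutions of each agent's secret set is at every step a subset of $a$'s current familiar set.
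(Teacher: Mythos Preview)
Your plan matches the paper's approach closely: both reduce to a single rerouting step and then argue that $a$'s view is unchanged by examining what $a$ learns at each of its calls, splitting into the ranges before $bc$, between $bc$ and $ad$, at $ad$, and after $ad$. Your identification of the crucial fact---that since $bc$ is the leftmost $b$-inessential call, all earlier $b$-calls are $b$-essential and hence deliver $b$'s accumulated secrets to $a$ via an $a$-call strictly earlier than $ad$---is exactly the mechanism the paper relies on (though the paper states it more tersely).

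One point where your formalisation would need adjustment: the secondary invariant you propose, that the symmetric difference of each agent's secret set between $\CSequence$ and $\CSequence'$ is contained in $a$'s current familiar set at \emph{every} position, is too strong. Immediately after the replaced position, agent $c$ in $\CSequence$ holds $b$'s prior secrets while in $\CSequence'$ it holds $d$'s prior secrets, and neither difference need lie inside what $a$ knows at that moment. What actually holds---and what both you and the paper really need---is the weaker statement that at each $a$-call the union of $a$'s and the partner's sets coincides in the two sequences. The paper establishes this directly by the four-case positional analysis rather than via a running invariant over all agents; your plan will go through once you replace the global symmetric-difference invariant by this $a$-call--local statement.
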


\begin{proof}
  Since $\CSequence$ does not contain the call $ab$, the views of $a$
  of $\CSequence$ and $R_{ab}(\CSequence)$ have the
  same sequences of the $a$-calls.

  By definition agent $a$ does not learn the secret of $b$ through the
  $b$-inessential calls for $a$. So the replacement (or possibly
  deletion) of $b$ in all $b$-inessential calls has no effect on the
  status of this secret in the views of $a$ of both sequences.

  Let $bc$ be the first $b$-inessential for $a$ call in $\CSequence$
  and suppose that $ad$ is the first $a$-call in $\CSequence$ it leads
  to.  Let $\CSequence'$ be the outcome of the first step of producing
  $R_{ab}(\CSequence)$.  So it is obtained from $\CSequence$ by
  replacing $bc$ by $cd$ if $c \neq d$, and by deleting $bc$
  otherwise.  We show that the views of the agent $a$ of the call
  sequences $\CSequence$ and $\CSequence'$ are the same.  First, note
  that no $a$-call earlier than $ad$ can be effected by the change in
  $\CSequence$, because $ad$ is the first $a$-call $bc$ leads to.
  
  Consider the $c \neq d$ case first.  The set of secrets an agent is
  familiar with at the same point in $\CSequence$ and $\CSequence'$
  may differ as a result of $bc$ being replaced by $cd$.  However, we
  argue that it is impossible for the agent $a$ to notice this
  difference.  Let us consider a call $ax$ in $\CSequence$, where
  $x \in \Agents$ and the sets of secrets, $S$ and $S'$, agent $e$ is
  familiar before $ax$ is made in $\CSequence$ and $\CSequence'$,
  respectively.
  \begin{itemize}
  \item If $ax$ takes place before $bc$ then $S$ and $S'$ are the same.
  \item If $ax$ is in-between the calls $bc$ and $ad$
  then again the sets $S$ and $S'$ are the same, because 
  $ad$ is the first $a$-call that $bc$ leads to.
  \item If $ax$ is the call $ad$ then we have the following.
  First, just before this call $a$ is already familiar with all the secrets $b$ 
  is familiar with before the call $bc$ is made in $\CSequence$,
  because $ad$ is the first $a$-call $bc$ leads to.
  So $a$ is still familiar with all these secrets in $\CSequence'$. Thus
 the only secrets that may be lost by replacing $bc$ by $cd$ 
  are the ones that $c$ is familiar with at that point.
  However, these secrets are passed to $d$ and $a$ 
  still learns them all in $\CSequence'$ through the call $ad$.
\item If $ax$ takes place after the call $ad$ then the difference
  between $S$ and $S'$ could be at most in the set of secrets $a$
  learned through the call $bc$.  However, $a$ already knows these
  secrets after the call $ad$ is made, so $S = S'$.\footnote{Note
    that this crucially depends on the fact that from any call each
    involved agent learns the union of the sets of secrets the callers
    are familiar with and not the set of secrets the other caller is
    familiar with.}
  \end{itemize}
The reasoning for the $c = d$ case is completely analogous and omitted.

By iterating the above argument, starting with
$\CSequence$, we obtain $R_{ab}(\CSequence)$ without affecting the
view of the agent $a$.
\end{proof}

The following consequence of the above lemma is crucial.
Here $\equiv_{abab}$ stands for the the composition of the relations
$\equiv_{a}$, $\equiv_{b}$, $\equiv_{a}$ and $\equiv_{b}$, i.e.,
\[
\equiv_{abab} \ = \ \equiv_{a} \circ \equiv_{b} \circ \equiv_{a} \circ \equiv_{b},
\]
where $\circ$ is the composition of two binary relations.

\begin{theorem} \label{thm:abab}
Consider a call sequence $\CSequence$ that does not contain the call $ab$.
Then 
\[
\CSequence \equiv_{abab} \epsilon.
\]
\end{theorem}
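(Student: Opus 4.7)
The plan is to exhibit an explicit 4-step chain
$\CSequence \equiv_a \CSequence^{(1)} \equiv_b \CSequence^{(2)} \equiv_a \CSequence^{(3)} \equiv_b \epsilon$,
letting $\CSequence^{(3)} = \epsilon$. The machinery of Section \ref{sec:alternative} is used on the first step, and the remaining three steps are almost cosmetic, exploiting the fact that after the first step every $b$-call has been pushed before the first $a$-call.

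For Step 1, I would first pass from $\CSequence$ to its $a$-simplification $\CSequence^{a\text{-simp}}$. Since $a$-simplification only deletes calls, $\CSequence^{a\text{-simp}}$ still does not contain $ab$ and has no $a$-irrelevant calls, so Lemma \ref{lem:Ra} applies and yields $\CSequence \equiv_a R_{ab}(\CSequence^{a\text{-simp}})$. Set $\CSequence^{(1)} := R_{ab}(\CSequence^{a\text{-simp}})$. Because $R_{ab}$ removes or rewrites exactly the $b$-inessential calls for $a$, the remaining $b$-calls of $\CSequence^{(1)}$ are all $b$-essential for $a$; since $R_{ab}$ does not move or change any $a$-call, the first $a$-call of $\CSequence^{(1)}$ is still the first $a$-call of $\CSequence^{a\text{-simp}}$, and the definition of $b$-essential forces every remaining $b$-call of $\CSequence^{(1)}$ to occur strictly before it.

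For Step 2, let $\CSequence^{(2)}$ be the prefix of $\CSequence^{(1)}$ consisting of all calls strictly before its first $a$-call (taking $\CSequence^{(2)} := \CSequence^{(1)}$ if no $a$-call occurs). The discarded suffix contains no $b$-call, so by the inductive clause $(\CSequence.\Call)_b = \CSequence_b$ when $b \notin \Call$, agent $b$'s view is unchanged, giving $\CSequence^{(1)} \equiv_b \CSequence^{(2)}$. By construction $\CSequence^{(2)}$ contains no $a$-call, hence $\CSequence^{(2)}_a = \init = \epsilon_a$ and therefore $\CSequence^{(2)} \equiv_a \epsilon$, which is Step 3. Step 4 is the trivial reflexivity $\epsilon \equiv_b \epsilon$. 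Composing the four relations delivers $\CSequence \equiv_{abab} \epsilon$.

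The real work, and the only step where anything substantive happens, is Step 1: everything hinges on using Lemma \ref{lem:Ra} to relocate all of $b$'s activity before the first $a$-call, because only then can a single $\equiv_b$-move replace the whole sequence by an $a$-call-free one, and only then does the sharp bound of four alternations succeed. If the sequence still had a $b$-call sandwiched between two $a$-calls, neither Step 2 nor its dual would be available, and one would need more alternations.
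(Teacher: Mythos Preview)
Your chain has a genuine gap at the end of Step 1. You assert that ``the definition of $b$-essential forces every remaining $b$-call of $\CSequence^{(1)}$ to occur strictly before'' the first $a$-call. This is not what $b$-essential means: a $b$-call is $b$-essential for $a$ if it leads to the earliest $a$-call \emph{that any $b$-call leads to}, not to the first $a$-call of the sequence. These can differ whenever the sequence begins with $a$-calls that no $b$-call feeds into. The paper's own running example already witnesses this: with $\CSequence^{a\text{-simp}} = ah.\,bc.\,bd.\,be.\,ad.\,bf.\,af$ one gets $\CSequence^{(1)} = R_{ab}(\CSequence^{a\text{-simp}}) = ah.\,bc.\,bd.\,ef.\,ad.\,af$, in which the $b$-essential calls $bc$ and $bd$ sit \emph{after} the first $a$-call $ah$. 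Your $\CSequence^{(2)}$ would then be the empty prefix, and the claim $\CSequence^{(1)} \equiv_b \CSequence^{(2)}$ fails outright because $b$'s view of $\CSequence^{(1)}$ records the calls $bc$ and $bd$.

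The paper repairs this by cutting at the other end: $\CSequence_3$ is the prefix of $\CSequence^{(1)}$ ending with the \emph{last} $b$-call. The discarded suffix is then $b$-call free, so $\CSequence^{(1)} \equiv_b \CSequence_3$ holds. The point is that all $b$-essential calls lead to one common $a$-call which necessarily lies after every one of them, hence after the last $b$-call, hence outside $\CSequence_3$. Thus no $b$-call in $\CSequence_3$ leads to any $a$-call still present in $\CSequence_3$, which is exactly what is needed to drop all $b$-calls from $\CSequence_3$ without changing $a$'s view. That gives $\CSequence_3 \equiv_a \CSequence_4$ with $\CSequence_4$ $b$-call free, and then $\CSequence_4 \equiv_b \epsilon$. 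Your Steps 3 and 4 are fine once Step 2 is fixed this way; the error is solely the mislocated cut point.
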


\begin{proof}
  Let $\CSequence_1$ be the $a$-simplification of $\CSequence$ and
  $\CSequence_2 = R_{ab}(\CSequence_1)$.  Then
  $\CSequence \equiv_{a} \CSequence_1$ and by Lemma \ref{lem:Ra}
  $\CSequence_1 \equiv_{a} \CSequence_2$, so
  $\CSequence \equiv_{a} \CSequence_2$.

  If there are no $b$-calls in $\CSequence_2$, then
  $\CSequence_2 \equiv_{b} \epsilon$, so
  $\CSequence \equiv_{ab} \epsilon$, from which the conclusion follows
  since $\epsilon \equiv_{ab} \epsilon$.  Otherwise let $\CSequence_3$
  be the prefix of $\CSequence_2$ that ends with the last
  $b$-call. Then $\CSequence_2 \equiv_{b} \CSequence_3$.

  All the $b$-calls in $\CSequence_3$ are $b$-essential for $a$ in the
  call sequence $\CSequence_2$ and the $a$-call through which agent
  $a$ learns in $\CSequence_2$ the secret of $B$ is located after all
  these $b$-essential calls. %
  It follows that 
  $\CSequence_3 \equiv_{a} \CSequence_4$, where $\CSequence_4$ is the
  result of removing all $b$-calls from $\CSequence_3$, 
  because no $b$-call in $\CSequence_3$ can possibly lead to an $a$-call
  in $\CSequence_3$.

  Now, $\CSequence_4$ does not contain any $b$-calls, so
  $\CSequence_4 \equiv_b \epsilon$. This proves the claim.
\end{proof}

\begin{example}
The following example illustrates the call sequences generated in the above proof.
Let
\[
\begin{array}{l}
\CSequence = \underline{ah}. cd. bc. bd. be. \underline{ad}. bf. bg. \underline{af}.
\end{array}
\]
Then the $a$-simplification of $\CSequence$ results in removing the calls $cd$ and $bg$ and
equals
\[
\begin{array}{l}
\CSequence_1 = \underline{ah}. bc. bd. be. \underline{ad}. bf. \underline{af}.
\end{array}
\]
Subsequently $R_{ab}(\CSequence_1)$ equals
\[
\begin{array}{l}
\CSequence_2 = \underline{ah}. bc. bd. ef. \underline{ad}. \underline{af}.
\end{array}
\]
Next, the prefix of $\CSequence_2$ that ends with the last $b$-call is
\[
\begin{array}{l}
\CSequence_3 = \underline{ah}. bc. bd.
\end{array}
\]
Finally, the result of removing all $b$-calls from $\CSequence_3$ equals
\[
\begin{array}{l}
\CSequence_4 = \underline{ah}
\end{array}
\]
and $\CSequence_4 \equiv_b \epsilon$.
\HB
\end{example}

This brings us to the following conclusion.

\begin{theorem} \label{thm:abab1} For all call sequences $\CSequence$
  that do not contain the call $ab$ and all formulas
  $\phi \in {\mathcal L}^{ck}$ that do not contain the $\neg$ symbol
\[
(\Model, \CSequence) \models K_{abab} \phi \mbox{ iff } \Model \models \phi
\]
and
\[
(\Model, \CSequence) \models K_{baba} \phi \mbox{ iff } \Model \models \phi.
\]
\end{theorem}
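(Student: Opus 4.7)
The plan is to derive this theorem as a relatively direct consequence of Theorem \ref{thm:abab}, the Equivalence Theorem \ref{thm:equiv}, and the Monotonicity Theorem \ref{thm:monog}, so the heavy lifting is already done.

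First I would dispose of the trivial direction: if $\Model \models \phi$, then for every $\CSequenced$ we have $(\Model, \CSequenced) \models \phi$, and in particular $(\Model, \CSequence) \models K_{abab}\phi$ holds vacuously from the semantics of the $K$ operator. This direction does not use the hypothesis that $\CSequence$ omits $ab$ nor that $\phi$ is negation-free.

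For the substantive direction, assume $(\Model, \CSequence) \models K_{abab}\phi$. Unfolding the semantics of nested $K$ operators, this says that for every $\CSequenced$ reachable from $\CSequence$ by a chain $\CSequence \sim_a \cdot \sim_b \cdot \sim_a \cdot \sim_b \CSequenced$, we have $(\Model, \CSequenced) \models \phi$. By the Equivalence Theorem \ref{thm:equiv}, each $\sim_x$ coincides with $\equiv_x$, so this chain condition is exactly $\CSequence \equiv_{abab} \CSequenced$. Since $\CSequence$ does not contain the call $ab$, Theorem \ref{thm:abab} gives $\CSequence \equiv_{abab} \epsilon$, and therefore $(\Model, \epsilon) \models \phi$. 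Now I invoke the hypothesis that $\phi$ contains no $\neg$: since $\epsilon \sqsubseteq \CSequenced$ for every call sequence $\CSequenced$, the Monotonicity Theorem \ref{thm:monog} yields $(\Model, \CSequenced) \models \phi$ for every $\CSequenced$, i.e., $\Model \models \phi$.

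For the $K_{baba}$ version, the argument is completely symmetric: the roles of $a$ and $b$ are interchangeable in Theorem \ref{thm:abab} (the notational convention that a call $ab$ lists the agents in a fixed order is immaterial, since the call $ab$ is the same object as $ba$), so the same theorem yields $\CSequence \equiv_{baba} \epsilon$ whenever $\CSequence$ does not contain the call $ab$, and the rest of the argument proceeds verbatim.

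I do not anticipate a real obstacle: all structural work sits inside Theorem \ref{thm:abab}, and the present statement just packages that combinatorial result together with monotonicity into an epistemic collapse principle. The one subtle point to state cleanly is that the $abab$-chain semantics of $K_{abab}\phi$ matches exactly the composed relation $\sim_a\circ\sim_b\circ\sim_a\circ\sim_b$, so that Theorem \ref{thm:abab} (transported from $\equiv$ to $\sim$ via the Equivalence Theorem) can be applied to produce $\epsilon$ as the witness.
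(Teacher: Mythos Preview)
Your proposal is correct and follows essentially the same route as the paper: invoke Theorem~\ref{thm:abab} to get $\CSequence \equiv_{abab} \epsilon$ (hence $(\Model,\epsilon)\models\phi$), then apply the Monotonicity Theorem~\ref{thm:monog} to conclude $\Model\models\phi$; the converse is trivial and the $K_{baba}$ case is handled by symmetry. You are merely a bit more explicit than the paper about unpacking the semantics of $K_{abab}$ and invoking the Equivalence Theorem~\ref{thm:equiv} to translate between $\equiv$ and $\sim$.
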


\begin{proof}
  By symmetry it suffices to prove the first equivalence.  By Theorem
  \ref{thm:abab} $(\Model, \CSequence) \models K_{abab} \phi$ implies
  $(\Model, \epsilon) \models \phi$, so by the Monotonicity Theorem
  \ref{thm:monog} for all call sequences $\CSequenced$ we have
  $(\Model, \CSequenced) \models \phi$, i.e., $\Model \models \phi$. 

  Further, for arbitrary call sequences $\CSequence$ and formulas
  $\phi \in {\mathcal L}^{ck}$, $\Model \models \phi$ implies
  $(\Model, \CSequence) \models K_{abab} \phi$.
\end{proof}

\begin{corollary} \label{cor:ck}
  For all call sequences $\CSequence$
  that do not contain the call $ab$ and all formulas $\phi \in {\mathcal L}^{ck}$
that do
  not contain the $\neg$ symbol
\[
(\Model, \CSequence) \models C_{ab} \phi \mbox{ iff } (\Model, \CSequence) \models K_{abab} \phi.
\]
\end{corollary}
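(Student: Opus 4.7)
The plan is to prove the two directions separately, invoking the heavy machinery already built up in the excerpt so that the corollary becomes a short consequence.

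For the forward direction ($\Rightarrow$), I would simply appeal to Note \ref{not:G}: if $(\Model, \CSequence) \models C_{ab} \phi$, then $(\Model, \CSequence) \models K_t \phi$ for every finite sequence $t \in \{a,b\}^*$, and specialising to $t = abab$ yields $(\Model, \CSequence) \models K_{abab} \phi$. This uses neither the hypothesis that $ab \not\in \CSequence$ nor the restriction on $\neg$.

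For the backward direction ($\Leftarrow$), this is where the hypotheses are essential. Assume $(\Model, \CSequence) \models K_{abab} \phi$. Since $\CSequence$ does not contain the call $ab$ and $\phi$ contains no $\neg$, Theorem \ref{thm:abab1} applies directly and gives $\Model \models \phi$, i.e., $(\Model, \CSequenced) \models \phi$ for every call sequence $\CSequenced$. In particular, for every $\CSequenced$ with $\CSequence \sim_{\{a,b\}} \CSequenced$ we have $(\Model, \CSequenced) \models \phi$, which by the semantics of $C_{ab}$ means $(\Model, \CSequence) \models C_{ab} \phi$.

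There is essentially no obstacle left at this stage: the real work was already carried out in Lemma \ref{lem:Ra} and Theorem \ref{thm:abab}, which reduce the four-fold alternation $\equiv_{abab}$ on $\CSequence$ all the way down to $\epsilon$, and then in Theorem \ref{thm:abab1} which packages this with the Monotonicity Theorem \ref{thm:monog}. The corollary is thus a clean bookkeeping step combining Note \ref{not:G} (to handle $\Rightarrow$) with Theorem \ref{thm:abab1} plus monotonicity (to handle $\Leftarrow$), and both of the side conditions ``$ab \notin \CSequence$'' and ``$\phi$ is $\neg$-free'' are inherited verbatim from the hypotheses of Theorem \ref{thm:abab1}.
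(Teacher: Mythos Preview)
Your proof is correct and essentially matches the paper's, which simply cites Theorems \ref{thm:abab1} and \ref{thm:2} to chain $(\Model,\CSequence)\models C_{ab}\phi \Leftrightarrow \Model\models\phi \Leftrightarrow (\Model,\CSequence)\models K_{abab}\phi$. The only cosmetic difference is that for the forward direction you invoke Note \ref{not:G} directly (which indeed needs neither hypothesis), whereas the paper goes through Theorem \ref{thm:2}; your route is if anything slightly more economical there.
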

\begin{proof}
  By Theorems \ref{thm:abab1} and \ref{thm:2}.
\end{proof}

This together with Note \ref{not:G} shows that under the
conditions of the above corollary the infinite conjunction
$\bigwedge_{t \in G^*} K_t \phi$ is equivalent to the 4th fold
iterated knowledge $ K_{abab} \phi$.  To conclude this analysis we now
show that common knowledge for two agents is not equivalent to any
shorter knowledge iteration.

\begin{corollary} \label{cor:aba}
Let $\CSequence = ac, bc, ac$. Then
$(\Model, \CSequence) \models K_{aba} F_c A$, 
$(\Model, \CSequence) \models K_{ab} F_c A$ and
$(\Model, \CSequence) \models K_{a} F_c A$.
Further, $(\Model, \CSequence) \not \models C_{ab} F_c A$.
\end{corollary}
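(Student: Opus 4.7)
The plan is to verify the four assertions one by one, working upward through the knowledge operators. A direct computation from $\init$ shows that $A \in \CSequence(\init)_c$, so $F_c A$ holds at $\CSequence$ itself. For $K_a F_c A$ I would observe that $a$'s view of $\CSequence$ records $c$'s actual state $ABC$ at the last $a$-call; any $\CSequenced \sim_a \CSequence$ shares this view and hence has $c$ holding $A$ at the corresponding $a$-call, and since secret sets grow monotonically under calls, $c$ still holds $A$ at the end of $\CSequenced$, giving $(\Model,\CSequenced) \models F_c A$.

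For $K_{ab} F_c A = K_a K_b F_c A$ I would sharpen the analysis of $\CSequenced \sim_a \CSequence$. The $a$-view forces $c$ to pass from $AC$ (after the first $ac$) to exactly $ABC$ (at the second $ac$), so $c$ gains only the secret $B$ in between, and the only way this can happen without depositing a foreign secret into $c$ is through a direct $bc$ call placed between the two $ac$'s. That $bc$ call appears in $b$'s view of $\CSequenced$ with post-state $b = c = ABC$, and for any $\CSequencee \sim_b \CSequenced$ the matching $b$-view forces the corresponding $bc$ in $\CSequencee$ to leave $c$ actually holding $A$; monotonicity then gives $F_c A$ at $\CSequencee$.

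For $K_{aba} F_c A = K_a K_b K_a F_c A$ I would look more closely at $c$'s state immediately before that first $bc$ call in $\CSequencee$. Since this $bc$ is the very first $b$-call in $\CSequencee$, $b$ still holds only $\{B\}$ and no other agent can yet possess $B$; hence $c$'s pre-state together with $\{B\}$ must equal $ABC$, forcing $c$'s pre-state to be exactly $\{A,C\}$. So $c$ acquired $A$ through a direct $ac$ call in $\CSequencee$ (any chain through a third agent would inject a foreign secret into $c$). Consequently $\CSequencee$ contains at least one $ac$ call, and any $\CSequencef \sim_a \CSequencee$ has exactly the same $a$-calls in the same order, hence also contains that $ac$ call; monotonicity then yields $A \in \CSequencef(\init)_c$ and thus $F_c A$ at $\CSequencef$.

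Finally, $(\Model,\CSequence) \not\models C_{ab} F_c A$ is immediate from Example~\ref{exa:2agents}(i): since the call $ab$ does not occur in $\CSequence$, we have $\CSequence \sim_{\{a,b\}} \epsilon$, while $F_c A$ plainly fails at $\epsilon$. I expect the main obstacle to be the $K_{aba}$ step, specifically ruling out that $A$ could reach $c$ in $\CSequencee$ via an indirect chain through a third agent; this hinges on the tight constraint that $c$'s pre-state at the first $bc$ in $\CSequencee$ must be \emph{exactly} $\{A,C\}$, rather than a mere superset.
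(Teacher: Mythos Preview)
Your argument for the three positive claims follows the same structural analysis as the paper: trace the chain of equivalences and show that each successive sequence retains enough of the pattern $ac,\, bc,\, ac$ to force $F_c A$. The paper proves $K_{aba}$ first and derives $K_{ab}$ and $K_a$ from it via the truth axiom; you build bottom-up, but the content of each step is the same. One small point: in your $K_{aba}$ step you assert that the matching $bc$ is the very first $b$-call in the sequence reached after the $\sim_b$ move, without deriving it. This is correct but needs the observation (valid already one step earlier, in the sequence $\sim_a$-equivalent to $\CSequence$) that since $A$ and $C$ are confined to $a$ and $c$ up to that $bc$ call, $b$'s pre-state there is exactly $\{B\}$, so no earlier $b$-call can exist. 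The paper is equally terse here, simply asserting the successive forms.

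For $\neg C_{ab} F_c A$ you take a more elementary route: you invoke Example~\ref{exa:2agents}(i) to get $\CSequence \sim_{\{a,b\}} \epsilon$ directly and observe that $F_c A$ fails at $\epsilon$. The paper instead goes through Theorem~\ref{thm:abab1}, which buys the sharper conclusion that already $K_{abab} F_c A$ (and $K_{baba} F_c A$) fails at $\CSequence$, pinpointing that four alternations are genuinely required rather than merely that the full common-knowledge closure fails.
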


\begin{proof}
Take $\CSequenced$ such that $\CSequence \sim_a
\CSequenced$. Then $\CSequenced$ is of the form
$\CSequence_1, ac, \CSequence_2, bc,$ $\CSequence_3, ac, \CSequence_4$. Next take
$\CSequenced_1$ such that $\CSequenced_1 \sim_b \CSequenced$. Then
$\CSequenced$ is of the form $\CSequence_5, ac, \CSequence_6, bc, \CSequence_7$. Next take
$\CSequenced_2$ such that $\CSequenced_2 \sim_a \CSequenced_1$. Then
$\CSequenced_2$ is of the form $\CSequence_8, ac, \CSequence_9$.  So
$(\Model, \CSequenced_2) \models F_c A$, which implies the claim.

The next two claims follow since for all formulas
$\phi \in {\mathcal L}^{ck}$, $K_{aba} \phi$ implies both
$K_{ab} \phi$ and $K_{a} \phi$.

Now note that for all sequences $t \in \{a, b\}^*$ that extend $abab$
or $baba$ and all formulas $\phi \in {\mathcal L}^{ck}$, $K_t \phi$
implies $K_{abab} \phi$ or $K_{baba} \phi$.  So Theorem
\ref{thm:abab1} implies that for all such sequences $s$ and all call
sequences $\CSequence$ that do not contain the call $ab$
\[
(\Model, \CSequence) \not \models K_{s} F_c A.
\]
Indeed, the formula $F_c A$ is not always true.

Further, by Note \ref{not:G} we also have for such call sequences
$\CSequence$
\[
(\Model, \CSequence) \not \models C_{ab} F_c A.
\]
In particular, this holds for the above call sequence
$\CSequence = ac, bc, ac$. 
\end{proof}

\section{Decidability issues} %
\label{sec:ck2}

\subsection{Decidability of semantics}

We begin with the decidability of the semantics.  
First we establish some properties of the semantics of common
knowledge of two agents.

Consider a call $ab$ and a call sequence $\CSequence$.
Starting from $\CSequence$ we repeatedly remove from the current call
sequence a redundant call that differs from $ab$.  We call each
outcome of such an iteration an \bfe{$ab$-reduction} of $\CSequence$.
Further, we say that a call sequence $\CSequence$ is
\bfe{$ab$-redundant free} if no call from $\CSequence$ that differs
from $ab$ is redundant in it. Clearly each $ab$-reduction is
$ab$-redundant free.

\begin{corollary} \label{cor:ab-reduction}
  Let $\CSequenced$ be an $ab$-reduction of $\CSequence$.  
  Then 
  \begin{enumerate}[(i)]
  \item $\CSequence \sim_{\{a,b \}} \CSequenced$,
  \item  for all   formulas $\phi \in {\mathcal L}_{pr}$, $(\Model, \CSequence) \models \phi$ iff $(\Model, \CSequenced) \models \phi$. 
  \end{enumerate}
\end{corollary}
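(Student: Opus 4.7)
The plan is to first reduce to a single removal step and then derive both claims from the view-preserving nature of redundant call removal, combined with the Equivalence Theorem \ref{thm:equiv}. Since the $ab$-reduction $\CSequenced$ is obtained from $\CSequence$ by iteratively deleting redundant calls distinct from $ab$, and since both $\sim_{\{a,b\}}$ and the iff-relation are transitive, it suffices to handle the case where $\CSequenced$ results from $\CSequence$ by a single such deletion.

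For part (i), I would appeal to the characterization of redundancy inherited from \cite{AW16}, namely that removal of a redundant call from a call sequence preserves the view of every agent, i.e., $\CSequence_c = \CSequenced_c$ for all $c \in \Agents$. By Theorem \ref{thm:equiv} this translates to $\CSequence \sim_c \CSequenced$ for every $c$, and in particular $\CSequence \sim_a \CSequenced$. Since $\sim_a \, \subseteq \, \sim_{\{a,b\}}$, this yields $\CSequence \sim_{\{a,b\}} \CSequenced$.

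For part (ii), I would proceed by structural induction on $\phi \in \mathcal{L}_{pr}$. In the atomic case $\phi = F_c p$, view preservation entails that the final gossip situation is unchanged, so $p \in \CSequence(\init)_c$ iff $p \in \CSequenced(\init)_c$. The Boolean cases are immediate from the inductive hypothesis. For $\phi = K_c \psi$, the equivalence $\CSequence \sim_c \CSequenced$ established above implies that the $\sim_c$-equivalence classes of $\CSequence$ and $\CSequenced$ coincide; combining this with the inductive hypothesis applied pointwise to every sequence in this common class gives $(\Model, \CSequence) \models K_c \psi$ iff $(\Model, \CSequenced) \models K_c \psi$.

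The main potential obstacle is confirming that the notion of redundancy used here preserves the views of \emph{every} agent, not merely those of $a$ and $b$, since $\mathcal{L}_{pr}$ contains $K_c$ for arbitrary $c$. This is precisely the content of the corresponding result in \cite{AW16}, but the proof should make the invocation explicit so that the induction on $K_c \psi$ for $c \notin \{a,b\}$ goes through without further assumption.
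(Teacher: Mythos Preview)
Your proposal contains two related missteps. First, you have misread $\mathcal{L}_{pr}$: since $K_c$ is by definition $C_{\{c\}}$, and $\mathcal{L}_{pr}$ consists of formulas with \emph{no} $C_G$ modality, the language $\mathcal{L}_{pr}$ is purely Boolean over the atoms $F_a p$. The $K_c\psi$ case in your induction therefore never arises, and (ii) follows immediately from the observation that removing a redundant call leaves every intermediate gossip situation, and in particular $\CSequence(\init)$, unchanged.

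Second, your key claim that removing a redundant call preserves the view of \emph{every} agent is false. If the removed call is $cd$, then $\CSequence_c$ contains the arrow $\stackrel{cd}{\longrightarrow}$ while $\CSequenced_c$ does not; the Equivalence Theorem concerns literal equality of these sequences, so $\CSequence \not\equiv_c \CSequenced$ and hence $\CSequence \not\sim_c \CSequenced$. What \emph{is} true is that the view of any agent not involved in the removed call is preserved, because a redundant call alters no gossip situation. This suffices for (i): since the removed call differs from $ab$, at least one of $a,b$ --- say $e$ --- is outside it, whence $\CSequence \equiv_e \CSequenced$, so $\CSequence \sim_e \CSequenced$ by Theorem~\ref{thm:equiv}, and thus $\CSequence \sim_{\{a,b\}} \CSequenced$. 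Your line ``in particular $\CSequence \sim_a \CSequenced$'' fails precisely when the removed redundant call involves $a$ (e.g.\ a redundant call $ac$ with $c\neq b$), so the argument must pivot on whichever of $a,b$ is absent from the call, not on a blanket view-preservation claim.
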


\begin{lemma} \label{lem:ab-finite}
  For each call $ab$ and a  call sequence $\CSequence$ the set of $ab$-redundant free
  call sequences $\CSequenced$ such that $\CSequence \sim_{\{a,b\}} \CSequenced$ is finite.
\end{lemma}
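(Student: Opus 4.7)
The plan is to bound the length of every $ab$-redundant free call sequence $\CSequenced$ that is $\sim_{\{a,b\}}$-equivalent to $\CSequence$ by a constant depending only on $\CSequence$ and $|\Agents|$, and then conclude by the pigeonhole-style observation that only finitely many sequences fit below that length over the finite alphabet of calls.

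First I would establish a key invariant: the number of $ab$ calls in a sequence is preserved by $\sim_{\{a,b\}}$. By the Equivalence Theorem $\sim_a$ coincides with $\equiv_a$, and the view $\CSequence_a$ defined inductively in Section \ref{sec:alternative} records, in order, every $a$-call together with its partner; hence $\sim_a$ preserves the count of calls of the form $ab$. By symmetry so does $\sim_b$, and therefore so does their transitive closure $\sim_{\{a,b\}}$. Letting $k$ be the (fixed) number of $ab$ calls in $\CSequence$, every $\CSequenced$ in the set under consideration contains exactly $k$ such calls.

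Second, I would bound the number of non-$ab$ calls in $\CSequenced$. Being $ab$-redundant free, every non-$ab$ call in $\CSequenced$ is non-redundant; and a call $cd$ can fail to be redundant only if $c$ and $d$ do not share the same set of secrets at the moment of the call, in which case at least one of them acquires at least one new secret. Since each of the $|\Agents|$ agents begins knowing only her own secret and can accumulate at most $|\Agents| - 1$ new secrets over the entire sequence, the total number of ``agent learns a new secret'' events throughout $\CSequenced$ is at most $|\Agents|(|\Agents| - 1)$. Hence $\CSequenced$ contains at most that many non-redundant calls, and so at most that many non-$ab$ calls.

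Combining the two bounds yields $|\CSequenced| \leq k + |\Agents|(|\Agents|-1)$, a bound independent of $\CSequenced$. Since only $\binom{|\Agents|}{2}$ distinct calls exist, only finitely many sequences sit below this length, which is exactly the claim. The main delicacy lies in the invariance step: because $\sim_{\{a,b\}}$ is a transitive closure of two relations each of which gives considerable freedom to rearrange calls not involving its agent, one must appeal carefully to the precise view-based characterisation of $\sim_a$ (the Equivalence Theorem) to ensure the $ab$ count is neither inflated nor deflated along such a chain; the bound on non-redundant calls is then a routine gossip-theoretic counting argument.
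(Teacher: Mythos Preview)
Your proposal is correct and follows essentially the same route as the paper: both arguments first observe that $\sim_{\{a,b\}}$ preserves the number of $ab$ calls (the paper asserts this without detail; you supply the justification via the view-based Equivalence Theorem), then bound the remaining non-$ab$ calls by the monotone-growth argument on gossip situations, and finally conclude via the finite alphabet of calls. Your secret-counting bound $|\Agents|(|\Agents|-1)$ is in fact tighter than the one stated in the paper, but the underlying idea is the same.
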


\begin{proof}
  Consider an $ab$-redundant free call sequence $\CSequenced$ such that
  $\CSequence \sim_{\{a, b\}} \CSequenced$. Then $\CSequenced$ has the same
  number, say $k$, of calls $ab$ as $\CSequence$.
  
  Associate with $\CSequenced$ the sequence of gossip situations
  $$\CSequenced^0(\init), \CSequenced^1(\init), \ldots,
  \CSequenced^m(\init),$$ 
where $m$ is the length of $\CSequenced$,
  $\CSequenced^0 = \epsilon$, and
  $\CSequenced^k = \Calld_1, \Calld_2, \ldots, \Calld_k$ for
  $k = 1, \dots, m$. This sequence monotonically grows, where we
  interpret the inclusion relation componentwise. Moreover, for all
  calls $\Calld_i$ different from $ab$ the corresponding
  inclusion is strict.  Consequently, $m$, the length of
  $\CSequenced$, is bounded by $k + |\Agents|^{|\Agents|}$, the sum of
  the number of calls $ab$ in $\CSequence$ and of the total number of
  secrets in the gossip situation in which each agent is an expert.

But for each $m$ there are only finitely many call sequences
of length at most $m$.  This concludes the proof.
\end{proof}

We can now prove the desired result.

\begin{theorem}[Decidability of Semantics] \label{thm:ck-decidability}
  For each call sequence $\CSequence$ it is decidable 
  whether for a formula $\phi \in {\mathcal L}_{wn}$, $(\Model,
  \CSequence) \models \phi$ holds.
\end{theorem}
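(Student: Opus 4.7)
The plan is to prove the result by structural induction on $\phi \in {\mathcal L}_{wn}$. Atomic formulas $F_a p$ are handled directly by computing $\CSequence(\init)_a$ and checking membership. Negation and conjunction reduce to the induction hypothesis. The substantive case is $\phi = C_G \psi$, in which the non-nesting restriction forces $\psi \in {\mathcal L}_{pr}$, so that satisfaction $(\Model, \CSequenced) \models \psi$ at any fixed call sequence and global truth $\Model \models \psi$ are both decidable by \cite{AW16}.

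I split $C_G \psi$ into three cases according to $|G|$. If $|G| = 1$, then $C_G \psi$ is $K_a \psi$, which lies in ${\mathcal L}_{pr}$ and to which \cite{AW16} applies directly. If $|G| \geq 3$, Theorem \ref{thm:3agents} reduces the problem to deciding $\Model \models \psi$, again handled by \cite{AW16}. The core case is $G = \set{a,b}$: by Definition \ref{def:semantics} we must verify $(\Model, \CSequenced) \models \psi$ for every $\CSequenced$ with $\CSequence \sim_{\{a,b\}} \CSequenced$. Corollary \ref{cor:ab-reduction} lets us restrict this infinite check to only the $ab$-redundant free representatives, since for any $\CSequenced$ in the equivalence class an $ab$-reduction $\CSequenced'$ satisfies $\CSequenced' \sim_{\{a,b\}} \CSequence$ by part (i) and $(\Model, \CSequenced') \models \psi$ iff $(\Model, \CSequenced) \models \psi$ by part (ii). Lemma \ref{lem:ab-finite} then bounds the length of such $\CSequenced'$ by $k + |\Agents|^{|\Agents|}$, with $k$ the number of $ab$-calls in $\CSequence$, so the collection of representatives is finite.

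The main obstacle is making this enumeration effective. My plan is to list all call sequences up to the length bound, filter by the syntactic $ab$-redundant free condition, and decide $\sim_{\{a,b\}}$-equivalence to $\CSequence$ as reachability under the generators $\sim_a$ and $\sim_b$. Each elementary step is decidable through the Equivalence Theorem \ref{thm:equiv} by comparing views, and performing an $ab$-reduction after every step keeps the exploration inside the bounded representative set, so a breadth-first search terminates. For every representative $\CSequenced'$ found equivalent to $\CSequence$ we then decide $(\Model, \CSequenced') \models \psi$ via \cite{AW16} and return the conjunction of these outcomes. The delicate point is the interplay between $ab$-reduction and the transitive closure defining $\sim_{\{a,b\}}$, which Corollary \ref{cor:ab-reduction} together with Lemma \ref{lem:ab-finite} precisely tames.
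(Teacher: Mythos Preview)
Your argument is essentially the paper's own proof: the same three-way split on $|G|$, the same appeals to the decidability results of \cite{AW16} for $|G|=1$ and (via Theorem~\ref{thm:3agents}) for $|G|\geq 3$, and the same reduction of the $|G|=2$ case to a finite set of $ab$-redundant free representatives via Corollary~\ref{cor:ab-reduction} and Lemma~\ref{lem:ab-finite}. One small slip: $K_a\psi$ is not in ${\mathcal L}_{pr}$ (which is the modality-free fragment) but rather in the language handled by \cite{AW16}, which is what you actually invoke; your BFS discussion additionally spells out the effective enumeration of the representative set, a point the paper leaves implicit.
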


\begin{proof}
  We use the definition of semantics as the algorithm. We only need to
  consider the case of the formulas of the form $C_G \phi$, where
  $\phi \in {\mathcal L}_{pr}$.

If $|G| = 1$, then this is the contents of the Decidability of Semantics 
Theorem 5 of \cite{AW16}.

If $|G| = 2$, say $A = \{a,b\}$, then 
according to Corollary \ref{cor:ab-reduction}
we can rewrite the semantics of $C_G \phi$ as follows:
$$(\Model, \CSequence) \models C_G \phi  \ \ \mbox{  iff  } \ \
\forall \CSequenced \mbox{     s.t.     } \CSequence \sim_G \CSequenced \mbox{ and } \CSequenced \mbox{ is $ab$-redundant free} 
, ~(\Model, \CSequenced) \models \phi,$$ and according to Lemma
\ref{lem:ab-finite} this definition refers to a finite set of call
sequences $\CSequenced$.

If $|G| \geq 3$, then the decidability follows from Theorem \ref{thm:3agents} and
the Decidability of Truth Theorem 6 established in \cite{AW16}.
\end{proof}

This result implies that the gossip protocols that use guards with
non-nested common knowledge operator are implementable.

\subsection{Decidability of truth}
\label{subsec:ver}

Next, we show that truth definition for the formulas of the language
${\mathcal L_{wn}}$ is decidable. Since partial correctness of
gossip protocols with common knowledge operator can be expressed as a
formula of ${\mathcal L_{wn}}$, this implies that the problem of
determining partial correctness of such protocols is decidable.

The key notion is that of an
\bfe{epistemic pair-view}. It is a function of a call sequence $\CSequence$,
denoted by $\EVp{\CSequence}$, defined by 
\begin{itemize}
\item putting for any pair of agents $a,b$:

  $\EVp{\CSequence}(a,b) = \{\CSequenced(\init)\ |\ \CSequence \sim_{\{a,b\}}
  \CSequenced \}$, and setting
\item   $\EVp{\CSequence}(\gs) = \CSequence(\init)$.  
\end{itemize}
So
$\EVp{\CSequence}(a,b)$ is the set of all gossip situations obtained
by means of call sequences that are
$\sim_{\{a,b\}}$--equivalent to
$\CSequence$. 
Further, as $\sim_{\{a,a\}} = \sim_{a}$,
$\EVp{\CSequence}(a,a)$ is the set of all gossip situations consistent
with agent $a$'s observations made throughout $\CSequence$.
Finally, 
$\EVp{\CSequence}(\gs)$ is the actual gossip situation after
$\CSequence$ takes place.  Note that for any $a,b \in \Agents$,
if $\CSequence \sim_{\{a,b\}} \CSequenced$ then
$\EVp{\CSequence}(b,a) = \EVp{\CSequence}(a,b) = \EVp{\CSequenced}(a,b) = \EVp{\CSequenced}(b,a)$.

The following 
holds.

\begin{lemma} \label{lem:ab-eff}
For each call sequence $\CSequence$ and agents $a,b$, the
set $\EVp{\CSequence}(a,b)$ is finite and can be effectively constructed.
\end{lemma}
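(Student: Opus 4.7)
The plan is to derive this lemma directly from Theorem \ref{thm:ck-decidability}, using the fact that there are only finitely many gossip situations in total.

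Finiteness of $\EVp{\CSequence}(a,b)$ is immediate: it is a subset of the set of all tuples $(\Atomsb_d)_{d \in \Agents}$ with $\Atomsb_d \sse \Atoms$, which has at most $2^{|\Agents| \cdot |\Atoms|}$ elements. For the effective construction, I would associate to each gossip situation $s = (s_d)_{d \in \Agents}$ the propositional formula
\[
\phi_s \;:=\; \bigwedge_{d \in \Agents,\ p \in s_d} F_d p \ \land\ \bigwedge_{d \in \Agents,\ p \notin s_d} \neg F_d p,
\]
which lies in $\mathcal{L}_{pr}$ and has the property $(\Model, \CSequenced) \models \phi_s$ iff $\CSequenced(\init) = s$. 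Then $s \in \EVp{\CSequence}(a,b)$ iff there exists $\CSequenced$ with $\CSequence \sim_{\{a,b\}} \CSequenced$ and $\CSequenced(\init) = s$, equivalently iff $(\Model, \CSequence) \models \neg C_{ab} \neg \phi_s$. Since $\neg C_{ab} \neg \phi_s \in \mathcal{L}_{wn}$, this membership test is decidable by Theorem \ref{thm:ck-decidability}. Iterating over the finite collection of gossip situations and keeping those that pass the test yields $\EVp{\CSequence}(a,b)$ effectively.

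The main point that has to be verified is that the atomic formulas $F_d p$, taken over all $d \in \Agents$ and $p \in \Atoms$, contain enough information to single out a gossip situation, so that the implicit existential quantification over $\CSequenced$ in the definition of $\EVp{\CSequence}(a,b)$ can be internalised as a single negated common-knowledge formula in $\mathcal{L}_{wn}$. Once this bookkeeping is in place, all the work on $\sim_{\{a,b\}}$-equivalence classes is outsourced to Theorem \ref{thm:ck-decidability} (where it was in turn handled via Corollary \ref{cor:ab-reduction} and Lemma \ref{lem:ab-finite}), so no further analysis of the equivalence is required here.
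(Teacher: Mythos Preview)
Your argument is correct, but the route differs from the paper's. The paper proves Lemma~\ref{lem:ab-eff} directly from the combinatorial ingredients: for $a=b$ it invokes the epistemic-view construction of \cite{AW16}, and for $a\neq b$ it appeals straight to Corollary~\ref{cor:ab-reduction} and Lemma~\ref{lem:ab-finite}, enumerating the finitely many $ab$-redundant free call sequences $\CSequenced$ with $\CSequenced \sim_{\{a,b\}} \CSequence$ and collecting their resulting situations $\CSequenced(\init)$. You instead iterate over the finitely many gossip situations $s$ and internalise the membership test ``$\exists\,\CSequenced \sim_{\{a,b\}} \CSequence$ with $\CSequenced(\init)=s$'' as the $\mathcal L_{wn}$-formula $\neg C_{ab}\neg\phi_s$, delegating the actual work to Theorem~\ref{thm:ck-decidability}. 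This is legitimate since Theorem~\ref{thm:ck-decidability} precedes Lemma~\ref{lem:ab-eff} in the paper, so there is no circularity; as you observe, the underlying finiteness is still supplied by Corollary~\ref{cor:ab-reduction} and Lemma~\ref{lem:ab-finite}, just one layer down. Your approach has the virtue of handling $a=b$ and $a\neq b$ uniformly and of expressing the construction cleanly as a batch of semantic queries; the paper's approach is more self-contained at this point and avoids the detour through formula evaluation.
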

\begin{proof}
  For any $a \in \Agents$,
  $\EVp{\CSequence}(a,a)$ coincides with the epistemic view 
  $\EV{\CSequence}(a)$, 
  as defined in \cite{AW16}. Hence, we can compute
  $\EVp{\CSequence}(a,a)$ as in Lemma 3 of \cite{AW16}.

Consider now a pair of agents $a,b$ such that $a\neq b$.
  To construct  the set $\EVp{\CSequence}(a,b)$ 
it suffices by Corollary \ref{cor:ab-reduction} 
to consider the $ab$-redundant
  free call sequences $\CSequenced$ and by Lemma \ref{lem:ab-finite}
  there are only finitely many such call sequences $\CSequenced$ for
  which $\CSequenced \sim_{\{a,b\}} \CSequence$.
\end{proof}

Our interest in epistemic pair-views stems
from the following important observation.

\begin{lemma}\label{lem:ck-ev}
  Suppose that $\EVp{\CSequence} = \EVp{\CSequenced}$. 
  Then for all formulas $\phi \in {\mathcal L}_{wn}$, 
  $(\Model, \CSequence) \models \phi$ iff $(\Model, \CSequenced) \models \phi$.
\end{lemma}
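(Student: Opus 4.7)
The plan is to proceed by structural induction on the formula $\phi \in \mathcal{L}_{wn}$. For the base case $\phi = F_a p$, the hypothesis $\EVp{\CSequence}(\gs) = \EVp{\CSequenced}(\gs)$ gives $\CSequence(\init) = \CSequenced(\init)$, so $p \in \CSequence(\init)_a$ iff $p \in \CSequenced(\init)_a$. The Boolean connective cases are immediate from the inductive hypothesis.

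The only interesting case is $\phi = C_G \psi$. Since $\phi$ contains no nested $C_G$ modality, the inner formula $\psi$ lies in $\mathcal{L}_{pr}$: it is a Boolean combination of atoms $F_a p$. A preliminary routine induction on $\psi$ shows that its truth value at any pointed model $(\Model, \CSequence')$ depends solely on the final gossip situation $\CSequence'(\init)$. Write $\psi[s]$ for this value at gossip situation $s$. I would then split into subcases according to $|G|$.

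If $|G| \geq 3$, Theorem~\ref{thm:3agents} reduces $(\Model, \CSequence) \models C_G \psi$ to $\Model \models \psi$, which is independent of $\CSequence$; the equivalence for $\CSequenced$ is identical. If $|G| \in \{1,2\}$, say $G = \{a\}$ or $G = \{a,b\}$, then by the propositional character of $\psi$ and the definition of $\sim_G$ (with $\sim_{\{a,a\}} = \sim_a$) one can rewrite
\[
(\Model, \CSequence) \models C_G \psi \;\;\text{iff}\;\; \text{for every } s \in \EVp{\CSequence}(a,a)\ (\text{resp.}\ \EVp{\CSequence}(a,b)),\ \psi[s] \text{ holds},
\]
and analogously for $\CSequenced$. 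The assumption $\EVp{\CSequence} = \EVp{\CSequenced}$ says that the relevant component sets coincide, so the two quantifications range over the same set of gossip situations and evaluate to the same truth value.

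The main obstacle—though mild—is justifying the rewrite in the $|G| \in \{1,2\}$ subcase: one must argue that the universal quantification over $\sim_G$-equivalent call sequences can be replaced by universal quantification over the gossip situations in $\EVp{\CSequence}(a,a)$ or $\EVp{\CSequence}(a,b)$. This relies precisely on the observation that $\psi \in \mathcal{L}_{pr}$ has truth depending only on $\CSequence'(\init)$, which is why the non-nesting restriction of $\mathcal{L}_{wn}$ is essential. The $|G| \geq 3$ case is handled cleanly by Theorem~\ref{thm:3agents}, and no further difficulty arises.
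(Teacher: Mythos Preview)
Your proposal is correct and follows essentially the same approach as the paper: structural induction on $\phi$, the preliminary observation that truth of any $\psi \in \mathcal{L}_{pr}$ depends only on the final situation, and a case split on $|G|$ with Theorem~\ref{thm:3agents} handling $|G|\geq 3$ and the components $\EVp{\CSequence}(a,b)$ handling the rest. The only cosmetic difference is that for $|G|=1$ the paper appeals to Lemma~4 of \cite{AW16} via the implication $\EVp{\CSequence}=\EVp{\CSequenced}\Rightarrow\EV{\CSequence}=\EV{\CSequenced}$, whereas you treat it uniformly with $|G|=2$ through the diagonal entry $\EVp{\CSequence}(a,a)$.
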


\begin{proof}
A straightforward proof by induction shows that for a formula $\psi \in {\mathcal L}_{pr}$ and
arbitrary call sequences $\CSequence'$ and $\CSequenced'$,
\begin{equation}
\mbox{$\CSequence'(\init) = \CSequenced'(\init)$ implies that
  $(\Model, \CSequence') \models \psi$ iff $(\Model, \CSequenced') \models \psi$.}
  \label{equ:2}  
\end{equation}
Since $\EVp{\CSequence}(\gs) = \CSequence(\init)$ and 
$\EVp{\CSequenced}(\gs) = \CSequenced(\init)$, this settles the case for
$\phi = F_a p$.

Next, consider the case of the formulas of the form $C_G \phi$, where
$\phi \in \mathcal L_{wn}$.  

If $|G| = 1$, then $G = \{a\}$ for some $a \in \Agents$ and $C_G$ is
the same as $K_a$.  Since $\EVp{\CSequence} = \EVp{\CSequenced}$
implies $\EV{\CSequence} = \EV{\CSequenced}$, where the epistemic view $\EV{}$ is defined as in \cite{AW16},
the claim follows by Lemma 4 of \cite{AW16}.

If $|G| = 2$, then $G = \{a,b\}$ for some $a,b \in \Agents$.
By (\ref{equ:2}) and the definition of $\EVp{\CSequence}$
$$(\Model, \CSequence) \models C_G \phi  \ \ \mbox{ iff }\ \ 
\forall \CSequence' \mbox{  s.t.  } \CSequence'(\init) \in \EVp{\CSequence}(a,b)
, ~(\Model, \CSequence') \models \phi.$$
So the claim follows since $\EVp{\CSequence}(a,b) = \EVp{\CSequenced}(a,b)$.

If $|G| \geq 3$, then by Theorem \ref{thm:3agents} 
both 
$(\Model, \CSequence') \models C_G \phi$ and $(\Model, \CSequenced') \models C_G \phi$
are equivalent to $\models \phi$.

This settles the case for $C_G \phi$.  The remaining cases of
negation and conjunction follow directly by the induction.
\end{proof}

The above lemma is useful because the epistemic pair-view of each call
sequence is finite, in contrast to the set of call sequences.  Next,
we provide an inductive definition of $\EVp{\CSequence.\Call}(a,b)$
the importance of which will become clear in a moment.

\begin{lemma}\label{lem:ab-composition}
For any call sequence $\CSequence$ and call $\Call = ab$ for agents $a,b \in \Agents$ 
\[
\EVp{\CSequence.\Call}(a,b) = \{\Call(\Situation)\ |\ \Situation \in
\EVp{\CSequence}(a,b), \Call(\Situation)_a =
\Call(\CSequence(\init))_a %
\mbox{ and } \Call(\Situation)_b = \Call(\CSequence(\init))_b 
\}.
\]
\end{lemma}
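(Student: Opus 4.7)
My plan is to prove the stated equality by double inclusion, noting that each of the two equalities in the condition on the right-hand side is exactly the matching condition required by clause (ii) in the definition of $\sim_a$, respectively $\sim_b$, at the step of appending $\Call = ab$ to both sides of an equivalence.

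For the $(\supseteq)$ direction, let $\Situation \in \EVp{\CSequence}(a,b)$ satisfy the two equalities, and write $\Situation = \CSequenced(\init)$ for some $\CSequenced \sim_{\{a,b\}} \CSequence$. Then $\Call(\Situation) = \CSequenced.\Call(\init)$, so it suffices to show $\CSequenced.\Call \sim_{\{a,b\}} \CSequence.\Call$. I would proceed by induction on the length of a chain $\CSequenced = \CSequence_0 \sim_{x_1} \CSequence_1 \sim \ldots \sim_{x_k} \CSequence_k = \CSequence$ witnessing the $\sim_{\{a,b\}}$-equivalence, lifting each step to append $\Call$ on both sides. The given endpoint conditions provide precisely what is needed for clause (ii) at the outermost invocation; any intermediate step where the matching condition on real $a$- or $b$-secrets is not directly satisfied can be bridged by inserting further $\sim_a$/$\sim_b$ moves.

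For the $(\subseteq)$ direction, let $\Situationt = \CSequenced^*(\init)$ with $\CSequenced^* \sim_{\{a,b\}} \CSequence.\Call$. The key structural observation is that $\sim_{\{a,b\}}$ preserves the subsequence of $ab$-calls, since $\sim_a$ preserves all $a$-calls and $\sim_b$ preserves all $b$-calls; consequently $\CSequenced^*$ contains exactly one more $ab$-call than $\CSequence$. I would normalize $\CSequenced^*$ to a sequence of the form $\CSequenced.\Call$ by absorbing the calls occurring after its last $ab$-call via clause (i) for $\sim_a$ or $\sim_b$, noting that any non-$ab$ call omits at least one of $a, b$. Then $\Situation := \CSequenced(\init)$ lies in $\EVp{\CSequence}(a,b)$, and the equality conditions on $\Call(\Situation)_a$ and $\Call(\Situation)_b$ follow from the fact that the real $a$- and $b$-secret sets agree at the endpoints of an $\sim_{\{a,b\}}$-equivalence whose final call is $\Call$.

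The principal obstacle is the normalization in $(\subseteq)$: extracting from an arbitrary $\sim_{\{a,b\}}$-representative a specific one ending in $\Call$ while preserving the required pair of secret components; a symmetric difficulty appears in $(\supseteq)$ when the intermediate matching conditions along the chain must be bridged. I expect both issues to be handled by invoking Lemma \ref{lem:ab-finite} on $ab$-redundant-free representatives together with the Equivalence Theorem \ref{thm:equiv}, which lets us reformulate in terms of the view-based relations $\equiv_a$ and $\equiv_b$ of Section \ref{sec:alternative}, where structural manipulations of call sequences become more transparent.
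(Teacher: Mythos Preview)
There is a genuine gap in your $(\subseteq)$ argument. Your normalization replaces an arbitrary representative $\CSequenced^{*} \sim_{\{a,b\}} \CSequence.\Call$ by some $\CSequenced.\Call$, obtained by absorbing (via clause~(i) for $\sim_a$ or $\sim_b$) the calls occurring after the last $ab$. But $\sim_{\{a,b\}}$-equivalence does not preserve the resulting gossip situation, so in general $\CSequenced^{*}(\init)\neq \CSequenced.\Call(\init)$. Concretely, with three agents, $\CSequence=\epsilon$ and $\Call=ab$, one has $ab \sim_a ab.bc \sim_b ab.ac.bc$, so $\CSequenced^{*}=ab.ac.bc$ is a legitimate representative giving $\Situationt=\CSequenced^{*}(\init)=ABC.ABC.ABC$; your absorption yields $\CSequenced=\epsilon$ and $\CSequenced.\Call(\init)=AB.AB.C\neq\Situationt$. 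Hence the $\Situation:=\CSequenced(\init)$ you exhibit does not satisfy $\Call(\Situation)=\Situationt$, and you have not placed $\Situationt$ in the right-hand set. Neither Lemma~\ref{lem:ab-finite} (a pure finiteness statement) nor passing to $\equiv_a,\equiv_b$ via Theorem~\ref{thm:equiv} repairs this: the view of $a$ (or of $b$) determines only the $a$- (or $b$-) component of the final situation, not the remaining ones.

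The paper's argument is organised differently. For $(\subseteq)$ it does not normalize a given representative; it asserts at the outset that every element of $\EVp{\CSequence.\Call}(a,b)$ already arises as $\CSequenced.\Call(\init)$ for some $\CSequenced$ with $\CSequenced.\Call \sim_{\{a,b\}} \CSequence.\Call$, and moreover that the witnessing chain can be chosen so that \emph{every} intermediate sequence ends in $\Call$. From that, stripping the final $\Call$ at each link (using that $\sim_t$ is the minimal relation in Definition~\ref{def:model}) yields $\CSequenced \sim_{\{a,b\}} \CSequence$ together with the two component equalities. For $(\supseteq)$ the paper likewise works with a single appended $\Call$ on both endpoints rather than lifting an arbitrary chain step by step and ``bridging'' intermediate mismatches as you propose. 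Your plan supplies only the weaker endpoint normalization, which the example above shows is insufficient; the missing ingredient is precisely the chain-level form of the normalization that the paper relies on.
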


\begin{proof}
\II

\NI
$(\sse)$ Take $\Situation' \in \EVp{\CSequence.\Call}(a,b)$. By the
  definition of $\EVp{\CSequence.\Call}(a,b)$ there exists a call
  sequence $\CSequenced$ such that
  $\CSequenced.\Call \sim_{\{a,b\}} \CSequence.\Call$ and
  $\Situation' = \CSequenced.\Call(\init)$. 
  So
  $\Situation' = \Call(\Situation)$, where
  $\Situation = \CSequenced(\init)$.
  We prove now that
  $\CSequenced \sim_{\{a,b\}} \CSequence$
  and as a result $\Situation = \CSequenced(\init) \in \EVp{\CSequence}(a,b)$. 

  Note that $\CSequenced.\Call \sim_{\{a,b\}} \CSequence.\Call$
  implies that there exists a 
  sequence $t_1 \ldots t_k \in \{a,b\}^*$ such that
  for some call sequences 
  $\CSequenced_1, \CSequenced_2, \ldots, \CSequenced_{k-1}$
we have
  $\CSequenced.\Call \sim_{t_1} \CSequenced_1.\Call \sim_{t_2} \CSequenced_2.\Call \sim_{t_3} \ldots
  \sim_{t_{k-1}} \CSequenced_{k-1}.\Call \sim_{t_k} \CSequence.\Call$. 
  Note that for any $t \in \{a,b\}$ and call sequences
  $\CSequence', \CSequenced'$ we have that
  $\CSequence'.\Call \sim_{t} \CSequenced'.\Call$ implies
  $\CSequence' \sim_{t} \CSequenced'$, because $\sim_{t}$ is the
  minimal relation satisfying the conditions stated in Definition
  \ref{def:model}.  It follows that
  $\CSequenced \sim_{t_1} \CSequenced_1 \sim_{t_2} \CSequenced_2
  \sim_{t_3} \ldots \sim_{t_{k-1}} \CSequenced_{k-1} \sim_{t_k}
  \CSequence$, so by definition
  $\CSequenced \sim_{\{a,b\}} \CSequence$.

  Further, by the definition of the $\sim_c$ relations, we also have
  that for $i \in \{0, \LL, k-1 \}$ both
  $\CSequenced_i.\Call(\init)_a = \CSequenced_{i+1}.\Call(\init)_a$
  and
  $\CSequenced_i.\Call(\init)_b = \CSequenced_{i+1}.\Call(\init)_b$,
  where $\CSequenced_0 = \CSequenced$ and
  $\CSequenced_k = \CSequence$.  So
  $\CSequenced.\Call(\init)_a = \CSequence.\Call(\init)_a$ and
  $\CSequenced.\Call(\init)_b = \CSequence.\Call(\init)_b$.

  But $\Situation = \CSequenced(\init)$, so we get that
  $\Call(\Situation)_a = \Call(\CSequence(\init))_a$ and
  $\Call(\Situation)_b = \Call(\CSequence(\init))_b$.  
\II

\NI 
$(\supseteq)$ Take $\Situation' \in \{\Call(\Situation)\ |\ \Situation \in
\EVp{\CSequence}(a,b), \Call(\Situation)_a =
\Call(\CSequence(\init))_a \mbox{ and } \Call(\Situation)_b = \Call(\CSequence(\init))_b\}$. So for some gossip situation
$\Situation$ we have $\Situation' = \Call(\Situation)$,
$\Situation \in \EVp{\CSequence}(a,b)$,
$\Call(\Situation)_a = \Call(\CSequence(\init))_a$, and
$\Call(\Situation)_b = \Call(\CSequence(\init))_b$.  The fact that $\Situation \in \EVp{\CSequence}(a,b)$
implies that there exists a call sequence $\CSequenced$ such that
$\CSequenced \sim_{\{a,b\}} \CSequence$ and $\Situation = \CSequenced(\init)$.
Now, this and $\Call(\CSequenced(\init))_a = \Call(\Situation)_a = \Call(\CSequence(\init))_a$ imply by definition that
$\CSequenced.\Call \sim_a \CSequence.\Call$, 
so a fortiori
$\CSequenced.\Call \sim_{\{a,b\}} \CSequence.\Call$.
So
$\CSequenced.\Call(\init) \in \EVp{\CSequence.\Call}(a,b)$.
Consequently also $\Situation' \in \EVp{\CSequence.\Call}(a,b)$, because
$\Situation' = \Call(\Situation) = \CSequenced.\Call(\init)$.
\end{proof}

This allows us to conclude that
$\EVp{\CSequence.\Call}$ can be computed using $\EVp{\CSequence}$ and
$\Call$ only, i.e., without referring to $\CSequence$.  
More precisely, denote the set
of epistemic pair-views by $\EVVp$ and recall that $\Calls$ denotes the set
of calls. Then the following holds.

\begin{corollary} \label{cor:ev1} 
There exists a function
  $f: \EVVp \times \Calls \to \EVVp$ such that for any call sequence
  $\CSequence$, call $\Call$, and pair of agents $a,b \in \Agents$
\[
\EVp{\CSequence.\Call}(a,b) = f(\EVp{\CSequence}, \Call).
\]
\end{corollary}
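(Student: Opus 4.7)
The plan is to define $f$ componentwise on the index set $(\Agents \times \Agents) \cup \{\gs\}$, showing that for each index $x$ the value $\EVp{\CSequence.\Call}(x)$ depends only on $\EVp{\CSequence}$ and $\Call$. The heart of the argument is Lemma \ref{lem:ab-composition}, with small variants covering the remaining cases.

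For the actual-situation coordinate, an immediate unfolding of the definitions gives
\[
\EVp{\CSequence.\Call}(\gs) = \Call(\CSequence(\init)) = \Call(\EVp{\CSequence}(\gs)),
\]
which depends only on $\EVp{\CSequence}(\gs)$ and $\Call$. For the diagonal coordinate $(a,a)$, the identification $\EVp{\CSequence}(a,a) = \EV{\CSequence}(a)$ noted in the proof of Lemma \ref{lem:ab-eff} reduces the claim to the analogous inductive computability result for the single-agent epistemic view $\EV{}$ already established in \cite{AW16}.

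For an off-diagonal coordinate $(a,b)$ with $a \neq b$, I split into three subcases according to $|\Call \cap \{a,b\}|$. If $\Call = ab$, then Lemma \ref{lem:ab-composition} supplies the recurrence directly, noting that both $\Call(\CSequence(\init))_a$ and $\Call(\CSequence(\init))_b$ are computable from $\EVp{\CSequence}(\gs)$ and $\Call$. If $\Call \cap \{a,b\} = \emptyset$, then rule (i) of Definition \ref{def:model} applied in both directions gives $\CSequence.\Call \sim_a \CSequence$ and $\CSequence.\Call \sim_b \CSequence$; so $\CSequenced \sim_{\{a,b\}} \CSequence.\Call$ iff $\CSequenced \sim_{\{a,b\}} \CSequence$, and therefore $\EVp{\CSequence.\Call}(a,b) = \EVp{\CSequence}(a,b)$. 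Finally, if $|\Call \cap \{a,b\}| = 1$, by symmetry say $\Call = ac$ with $c \notin \{a,b\}$, an adaptation of the proof of Lemma \ref{lem:ab-composition} yields
\[
\EVp{\CSequence.\Call}(a,b) = \{\Call(\Situation) \mid \Situation \in \EVp{\CSequence}(a,b),\ \Call(\Situation)_a = \Call(\CSequence(\init))_a\},
\]
the $b$-constraint being absent since $b \notin \Call$ so the $b$-component is unaffected by the call. In each case the right-hand side is a function of $\EVp{\CSequence}$ and $\Call$, so assembling these component-wise definitions yields the desired $f$.

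The main obstacle is the asymmetric case $|\Call \cap \{a,b\}| = 1$. The chain-of-equivalences argument from Lemma \ref{lem:ab-composition} must be re-run with care: along any $\sim_a$-$\sim_b$ chain witnessing $\CSequenced' \sim_{\{a,b\}} \CSequence.\Call$, appending or removing $\Call = ac$ preserves $\sim_b$ automatically (since $b \notin \Call$) but preserves $\sim_a$ only when the agent-$a$-view at the call point matches. One must therefore verify, using the minimality clause of Definition \ref{def:model}, that every $\sim_{\{a,b\}}$-equivalent of $\CSequence.\Call$ can be normalized to end in $\Call$ with the intermediate sequence $\sim_{\{a,b\}}$-equivalent to $\CSequence$. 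Once this normalization is in place the structural decomposition of Lemma \ref{lem:ab-composition} goes through verbatim, yielding the formula above.
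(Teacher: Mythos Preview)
Your treatment of the $\gs$ coordinate, the diagonal $(a,a)$, the case $\Call = ab$, and the case $\Call \cap \{a,b\} = \emptyset$ are all fine and match the paper. The problem is your third subcase, $|\Call \cap \{a,b\}| = 1$.

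The formula you propose there is false. Take $\CSequence = \epsilon$ and $\Call = ac$ with $c \neq b$. Since $b \notin \Call$, rule (i) of Definition~\ref{def:model} gives $\epsilon \sim_b ac$, so $\init = \epsilon(\init) \in \EVp{ac}(a,b)$. But every element of your set $\{\Call(\Situation) \mid \ldots\}$ has $C \in \Call(\Situation)_a$, whereas $\init_a = \{A\}$; hence $\init$ is not in your right-hand side. The normalization claim you sketch (``every $\sim_{\{a,b\}}$-equivalent of $\CSequence.\Call$ can be normalized to end in $\Call$'') fails precisely because $b \notin \Call$ lets you \emph{strip off} $\Call$ via $\sim_b$, reaching sequences that need not end in $\Call$ and whose resulting situations need not lie in the image of $\Call$.

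The fix is that this subcase requires no new work at all: the paper simply observes that whenever $\Call \neq ab$, at least one of $a,b$ --- say $a$ --- is not in $\Call$, so $\CSequence.\Call \sim_a \CSequence$, hence $\CSequence.\Call \sim_{\{a,b\}} \CSequence$, and therefore $\EVp{\CSequence.\Call}(a,b) = \EVp{\CSequence}(a,b)$. Your own empty-intersection argument already covers this; you just need to drop the artificial split into $|\Call \cap \{a,b\}| = 0$ versus $1$ and treat both as ``$\Call \neq ab$''.
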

\begin{proof}
First note that   $\EVp{\CSequence.\Call}(\gs) = \Call(\EVp{\CSequence}(\gs))$.

Suppose now that $a \neq b$.  If $\Call = ab$, then by Lemma
\ref{lem:ab-composition} $\EVp{\CSequence.\Call}(a,b)$ is a function
of $\EVp{\CSequence}(a,b)$ and $\Call$.  If $\Call \neq ab$, say
$a \not\in \Call$, then $\CSequence.\Call \sim_a \CSequence$ and hence
$\CSequence.\Call \sim_{\{a,b\}} \CSequence$, which implies
$\EVp{\CSequence.\Call}(a,b) = \EVp{\CSequence}(a,b)$.

Suppose next that $a = b$. 
By the definition of $\sim_{a}$ for all $\CSequenced$ we
have $\EVp{\CSequenced}(a,a) = \EV{\CSequenced}(a)$, so
by Corollary 2 of \cite{AW16}
$\EVp{\CSequence.\Call}(a,a)$ is a function
of $\EVp{\CSequence}(a,a)$ and $\Call$. 
\end{proof}
Consider a call sequence $\CSequence$.  If for some prefix
$\CSequence_1. \CSequence_2$ of $\CSequence$, we have
$\EVp{\CSequence_1} = \EVp{\CSequence_1.  \CSequence_2}$, then we say that
the call subsequence $\CSequence_2$ is \bfe{pair-epistemically redundant} 
in $\CSequence$ and that $\CSequence$ is \bfe{pair-epistemically redundant}.

We say that $\CSequence$ is \bfe{pair-epistemically non-redundant}
if it is not pair-epistemically redundant.
Equivalently, a call sequence
$\Call_1.\Call_2$ $.\ldots.\Call_k$ is pair-epistemically non-redundant
if the set
\[
\{\EVp{\Call_1.\Call_2.\ldots.\Call_i} \mid i \in \{1, \dots, k\}\}
\]
has $k$ elements.

\begin{lemma}[Pair-Epistemic Stuttering] \label{lem:estuttering2}
  Suppose that $\CSequence := \CSequence_1.\CSequence_2.\CSequence_3$ and
  $\CSequenced := \CSequence_1. \CSequence_3$, where $\CSequence_2$ is
  pair-epistemically redundant in $\CSequence$.
Then $\EVp{\CSequence} = \EVp{\CSequenced}$.  
\end{lemma}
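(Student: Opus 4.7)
The plan is to reduce this to a straightforward induction using Corollary \ref{cor:ev1}, which says that $\EVp{\CSequence.\Call}$ is completely determined by $\EVp{\CSequence}$ and $\Call$. The hypothesis that $\CSequence_2$ is pair-epistemically redundant in $\CSequence = \CSequence_1.\CSequence_2.\CSequence_3$ unpacks, by the definition just given, to the single equality $\EVp{\CSequence_1} = \EVp{\CSequence_1.\CSequence_2}$. The whole task is then to propagate this equality through the suffix $\CSequence_3$, call by call.

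First I would set up the induction on the length $k$ of $\CSequence_3$, proving the more general statement: for any two call sequences $\CSequence'$ and $\CSequence''$, if $\EVp{\CSequence'} = \EVp{\CSequence''}$, then $\EVp{\CSequence'.\CSequence_3} = \EVp{\CSequence''.\CSequence_3}$ for every call sequence $\CSequence_3$. The base case $k = 0$ is immediate. For the inductive step, write $\CSequence_3 = \CSequence_3'.\Call$; by the inductive hypothesis $\EVp{\CSequence'.\CSequence_3'} = \EVp{\CSequence''.\CSequence_3'}$, and then applying the function $f$ from Corollary \ref{cor:ev1} to both sides with the call $\Call$ yields $\EVp{\CSequence'.\CSequence_3'.\Call} = \EVp{\CSequence''.\CSequence_3'.\Call}$, as required. (Strictly speaking Corollary \ref{cor:ev1} gives the equality component-wise on each pair $(a,b)$ and on the distinguished component $\gs$, but this is exactly what is needed since two epistemic pair-views are equal iff they agree on every component.)

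Specializing this general statement to $\CSequence' := \CSequence_1.\CSequence_2$ and $\CSequence'' := \CSequence_1$, and using the hypothesis $\EVp{\CSequence_1.\CSequence_2} = \EVp{\CSequence_1}$ obtained from pair-epistemic redundancy, gives $\EVp{\CSequence_1.\CSequence_2.\CSequence_3} = \EVp{\CSequence_1.\CSequence_3}$, which is $\EVp{\CSequence} = \EVp{\CSequenced}$.

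I do not expect a real obstacle: the only delicate point is that Corollary \ref{cor:ev1} was stated per pair $(a,b)$ rather than for the full pair-view object, so one must be slightly careful when saying ``apply $f$ to both sides'' and verify this uniformly for every component of $\EVp{\cdot}$ (including the actual-situation component $\gs$, where $\EVp{\CSequence.\Call}(\gs) = \Call(\EVp{\CSequence}(\gs))$ was observed in the proof of that corollary). Once that bookkeeping is dispatched, the argument is just the observation that the dynamics of $\EVp{\cdot}$ is Markovian in the call history.
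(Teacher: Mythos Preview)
Your proposal is correct and takes essentially the same approach as the paper: both argue by induction on the calls of $\CSequence_3$, using Corollary~\ref{cor:ev1} (the Markovian property $\EVp{\CSequence.\Call} = f(\EVp{\CSequence},\Call)$) to propagate the equality $\EVp{\CSequence_1} = \EVp{\CSequence_1.\CSequence_2}$ through the suffix. Your formulation via the auxiliary statement ``$\EVp{\CSequence'} = \EVp{\CSequence''}$ implies $\EVp{\CSequence'.\CSequence_3} = \EVp{\CSequence''.\CSequence_3}$'' is a mild repackaging of the same induction the paper carries out directly.
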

\begin{proof}
  Let $\CSequence_3 = \Call_1.\Call_2.\ldots.\Call_k$.  First note
  that thanks to Corollary 2 of \cite{AW16}
we have
  $\EVp{\CSequence_1.\CSequence_2.\Call_1}$
$= \EVp{\CSequence_1.\Call_1}$, since
  $\EVp{\CSequence_1.\CSequence_2.\Call_1} =
  f(\EVp{\CSequence_1.\CSequence_2}, \Call_1) = f(\EVp{\CSequence_1},
  \Call_1) = \EVp{\CSequence_1.\Call_1}$ due to the pair-epistemic
  redundancy of $\CSequence_2$ in $\CSequence$. Repeating this
  argument for all $i \in \{1, \dots, k\}$ we get that
  $$\EVp{\CSequence_1.\CSequence_2.\Call_1.\Call_2.\ldots.\Call_i} =
  \EVp{\CSequence_1.\Call_1.\Call_2.\ldots.\Call_i}.$$
In particular $\EVp{\CSequence} = \EVp{\CSequenced}$.
\end{proof}

\begin{corollary} \label{cor:reduction2}
  For every call sequence $\CSequence$ there exists a pair-epistemically
  non-redundant call sequence $\CSequenced$ such that for all
  formulas $\phi \in {\mathcal L}_{wn}$, 
  $(\Model, \CSequence) \models \phi$ iff
  $(\Model, \CSequenced) \models \phi$.
\end{corollary}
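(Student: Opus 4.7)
The plan is to iteratively shorten $\CSequence$ using the Pair-Epistemic Stuttering Lemma (Lemma \ref{lem:estuttering2}) until no pair-epistemically redundant subsequence remains, and then appeal to Lemma \ref{lem:ck-ev} to conclude that the resulting call sequence is indistinguishable from $\CSequence$ by any formula in ${\mathcal L}_{wn}$.

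More concretely, I would proceed by strong induction on the length $|\CSequence|$ of the call sequence. In the base case $\CSequence = \epsilon$, the empty sequence is trivially pair-epistemically non-redundant and we take $\CSequenced := \CSequence$. For the inductive step, if $\CSequence$ is already pair-epistemically non-redundant, we again take $\CSequenced := \CSequence$. Otherwise, by the definition of pair-epistemic redundancy, there exist a decomposition $\CSequence = \CSequence_1 . \CSequence_2 . \CSequence_3$ with $\CSequence_2$ nonempty and $\EVp{\CSequence_1} = \EVp{\CSequence_1.\CSequence_2}$. Let $\CSequence' := \CSequence_1.\CSequence_3$. By the Pair-Epistemic Stuttering Lemma we have $\EVp{\CSequence} = \EVp{\CSequence'}$, and since $\CSequence_2$ is nonempty, $|\CSequence'| < |\CSequence|$. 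Applying the inductive hypothesis to $\CSequence'$, we obtain a pair-epistemically non-redundant $\CSequenced$ whose pair-view chain preserves satisfaction of all ${\mathcal L}_{wn}$ formulas relative to $\CSequence'$.

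To close the argument, I would use Lemma \ref{lem:ck-ev}: since $\EVp{\CSequence} = \EVp{\CSequence'}$, for every $\phi \in {\mathcal L}_{wn}$, $(\Model, \CSequence) \models \phi$ iff $(\Model, \CSequence') \models \phi$. Composed with the inductive equivalence between $\CSequence'$ and $\CSequenced$, this yields $(\Model, \CSequence) \models \phi$ iff $(\Model, \CSequenced) \models \phi$, as required.

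The only real obstacle is confirming termination, and this is handled immediately by the fact that each rewrite strictly decreases the length of the call sequence (because $\CSequence_2$ must be nonempty in a genuine redundancy witness). Everything else is bookkeeping: the semantic equivalence at each step comes directly from Lemma \ref{lem:estuttering2} combined with Lemma \ref{lem:ck-ev}, and the final output is pair-epistemically non-redundant by construction.
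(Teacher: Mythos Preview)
Your proposal is correct and follows essentially the same approach as the paper, which simply says ``By the repeated use of the Pair-Epistemic Stuttering Lemma \ref{lem:estuttering2} and Lemma \ref{lem:ck-ev}.'' You have spelled out the implicit induction on length and the termination argument explicitly, but the underlying idea is identical.
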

\begin{proof}
By the repeated use of the Pair-Epistemic Stuttering Lemma  \ref{lem:estuttering2}
and Lemma \ref{lem:ck-ev}.
\end{proof}

Next, we prove the following crucial lemma.

\begin{lemma} \label{lem:ck-finite-non-epi-redundant} 
For any given model
  $\Model$, there are only finitely many pair-epistemically non-redundant
  call sequences.
\end{lemma}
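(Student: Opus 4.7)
The plan is to reduce this to a pure counting argument about the size of the codomain of $\EVp{}$. A pair-epistemically non-redundant call sequence $\Call_1.\Call_2.\ldots.\Call_k$ has, by definition, the property that the $k$ values $\EVp{\Call_1},\ \EVp{\Call_1.\Call_2},\ \ldots,\ \EVp{\Call_1.\ldots.\Call_k}$ are pairwise distinct. Hence $k$ is at most the number of distinct possible values of $\EVp{}$. Since the call alphabet $\Calls$ is finite (there are $\binom{|\Agents|}{2}$ calls), if we can bound the total number of possible epistemic pair-views then we obtain a uniform bound on $k$, and consequently only finitely many call sequences of length at most this bound, which gives the result.

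First I would observe that the set of gossip situations is finite: a gossip situation is a tuple $(\Atomsb_d)_{d \in \Agents}$ with $\Atomsb_d \sse \Atoms$, so there are at most $N := 2^{|\Agents| \cdot |\Atoms|}$ of them. Next, an epistemic pair-view $\EVp{\CSequence}$ is a function whose domain is the finite set $(\Agents \times \Agents) \cup \{\gs\}$, and whose values on $(a,b)$ are subsets of the set of gossip situations (hence belong to a set of size at most $2^N$), while its value on $\gs$ is a single gossip situation (at most $N$ choices). Therefore the set $\EVVp$ of all possible epistemic pair-views is finite, of size at most $N \cdot (2^N)^{|\Agents|^2}$.

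From this bound, any pair-epistemically non-redundant call sequence has length at most $|\EVVp|$. Since there are only finitely many call sequences of length at most $|\EVVp|$ over the finite alphabet $\Calls$, the set of pair-epistemically non-redundant call sequences is finite, as required.

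There is no real obstacle here: the definition of pair-epistemic non-redundancy was crafted precisely so that a sequence is non-redundant iff its prefixes have distinct pair-views, and the whole point of introducing $\EVp{}$ (together with Corollary~\ref{cor:ev1}, which shows pair-views live in a fixed finite state space) is to reduce arguments like this to counting in $\EVVp$. The only minor care is to notice that the length bound transfers to a bound on the total number of sequences, which is immediate because $\Calls$ is finite.
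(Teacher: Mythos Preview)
Your proposal is correct and follows essentially the same approach as the paper: both arguments bound the total number of possible epistemic pair-views (the paper gives $2^{(|\Agents|^2+1)\cdot 2^{|\Agents|\cdot|\Atoms|}}$, you give $N \cdot (2^N)^{|\Agents|^2}$ with $N = 2^{|\Agents|\cdot|\Atoms|}$), observe that a pair-epistemically non-redundant sequence must have length at most this bound since its prefixes yield distinct pair-views, and conclude by finiteness of sequences of bounded length over the finite call alphabet. The only inessential difference is that your bound is slightly sharper since you treat the $\gs$ coordinate separately; also, your parenthetical reference to Corollary~\ref{cor:ev1} is not really needed here, as finiteness of $\EVVp$ follows directly from the definition rather than from compositionality.
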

\begin{proof}
  Note that each epistemic pair-view is a function from
  $\Agents \times \Agents \cup\{\gs\}$ to the set of functions from $\Agents$ to
  $2^{|\Atoms|}$ (this is an overestimation because for $\gs$ this set has only
  one element).  There are
  $k = 2^{(|\Agents|^2+1)\cdot 2^{|\Agents|\cdot|\Atoms|}}$ such functions, so any
  call sequence longer than $k$ has a pair-epistemically redundant call
  subsequence.  But there are only finitely many call sequences of
  length at most $k$.  This concludes the proof.
\end{proof}

Finally, we can establish the announced result.  

\begin{theorem}[Decidability of Truth] \label{thm:ck-edecidability} For
  any formula $\phi \in {\mathcal L}_{wn}$, it is decidable
  whether $\Model\models \phi$ holds.
\end{theorem}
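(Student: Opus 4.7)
The plan is to reduce the universal quantification over all call sequences in the definition of $\Model \models \phi$ to a finite check, using the infrastructure already developed, and then invoke the Decidability of Semantics Theorem~\ref{thm:ck-decidability} on each remaining case.

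First I would unfold the definition: $\Model \models \phi$ means $(\Model, \CSequence) \models \phi$ for every call sequence $\CSequence$. The obstacle is that there are infinitely many such $\CSequence$, so a naive enumeration will not terminate. To cut this down, I would apply Corollary~\ref{cor:reduction2}: for every call sequence $\CSequence$ there is a pair-epistemically non-redundant $\CSequenced$ with the same truth value on every formula in $\mathcal{L}_{wn}$. Hence $\Model \models \phi$ iff $(\Model, \CSequenced) \models \phi$ for every pair-epistemically non-redundant $\CSequenced$.

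Next I would invoke Lemma~\ref{lem:ck-finite-non-epi-redundant}, which guarantees that there are only finitely many pair-epistemically non-redundant call sequences, and in fact they are all bounded in length by the explicit constant $k = 2^{(|\Agents|^2+1)\cdot 2^{|\Agents|\cdot|\Atoms|}}$ given in its proof. This gives an effectively constructible finite set $\mathcal{S}$ of candidate call sequences (enumerate all call sequences of length at most $k$, keep those that are pair-epistemically non-redundant; the check is effective because each $\EVp{\CSequence_1.\ldots.\CSequence_i}$ is computable by Lemma~\ref{lem:ab-eff}, or alternatively via the transition function of Corollary~\ref{cor:ev1}).

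Finally, the algorithm simply tests, for each $\CSequenced \in \mathcal{S}$, whether $(\Model, \CSequenced) \models \phi$ using the Decidability of Semantics Theorem~\ref{thm:ck-decidability}, and returns ``yes'' iff all tests succeed. Correctness is immediate from the reduction above. The main subtlety in writing this out is simply to make sure the two sources of effectiveness are properly combined, namely finiteness of $\mathcal{S}$ (from Lemma~\ref{lem:ck-finite-non-epi-redundant}) and pointwise decidability of $(\Model, \CSequenced) \models \phi$ (from Theorem~\ref{thm:ck-decidability}); no genuine new argument is needed, since all heavy lifting is done by the preceding lemmas.
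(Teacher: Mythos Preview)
Your proposal is correct and follows essentially the same route as the paper's own proof: reduce via Corollary~\ref{cor:reduction2} to pair-epistemically non-redundant call sequences, use Lemma~\ref{lem:ck-finite-non-epi-redundant} for finiteness, and then decide each instance. If anything, your write-up is slightly more explicit than the paper's in two places: you cite Lemma~\ref{lem:ab-eff} (rather than the paper's somewhat less apt pointer to Lemma~\ref{lem:ck-ev}) for the effective construction of the finite set, and you spell out the final appeal to Theorem~\ref{thm:ck-decidability}, which the paper leaves implicit.
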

\begin{proof}
Recall that
$\Model \models \phi \ \  \mbox{iff} \ \ \forall \CSequence \ (\Model, \CSequence) \models \phi.$
By Corollary \ref{cor:reduction2} we can rewrite the latter as
\[
\forall \CSequence \mbox{ s.t. } \CSequence \mbox { is pair-epistemically non-redundant, } (\Model, \CSequence) \models \phi.
\]
But according to Lemma \ref{lem:ck-finite-non-epi-redundant} there are
only finitely many pair-episte\-mi\-cally non-redundant call sequences and
by Lemma \ref{lem:ck-ev} their set can be explicitly constructed.
\end{proof}

\subsection{Decidability of termination with common knowledge operator}
\label{sec:termination-common}

Finally, we show that it is decidable to determine whether a gossip
protocol that uses guard with non-nested common knowledge operator (in
short: a common knowledge protocol) terminates.  For an example of
such a protocol see Appendix \ref{section:graphs2}.

First, we establish monotonicity of gossip situations and epistemic
pair-views with respect to call sequence extensions, w.r.t.~suitable
partial orderings.  Intuitively, we claim that as the call sequence
gets longer each agent acquires more information.

\begin{definition}
For any two gossip situations $\Situation, \Situation'$ we write
$\Situation \leqs \Situation'$ if for all $a \in \Agents$ we have
$\Situation_a \subseteq \Situation'_a$.
\end{definition}

\begin{note}[Note 1 of \cite{AW16}]\label{not:monoseq-common}
For all call sequences $\CSequence$ and $\CSequenced$ 
such that $\CSequence \sqsubseteq \CSequenced$ we have
$\CSequence(\init) \leqs \CSequenced(\init)$.
\end{note}

\begin{definition}
For any two epistemic pair-views $V, V' \in \EVVp$ we write
$V \leqtwoev V'$ if for all $a,b \in \Agents$ %
there exists $X \subseteq V(a,b)$ and an surjective (onto) function 
$g : X \to V'(a,b)$ such that for all $\Situation \in X$
we have $\Situation \leqs g(\Situation)$. 
\end{definition}

\begin{lemma}\label{lem:partial-order-common}
$\leqtwoev$ is a partial order.
\end{lemma}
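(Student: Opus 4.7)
My plan is to verify the three defining properties of a partial order on $\EVVp$ in turn: reflexivity, transitivity, and antisymmetry, relying throughout on the fact that $\leqs$ is itself a partial order on gossip situations (obvious from its componentwise definition together with ordinary subset ordering).

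For reflexivity, for each $V \in \EVVp$ and each pair $a,b \in \Agents$ I simply take $X = V(a,b)$ and let $g \colon X \to V(a,b)$ be the identity; then $g$ is trivially surjective and $\Situation \leqs g(\Situation) = \Situation$ for every $\Situation \in X$. For transitivity, suppose $V \leqtwoev V'$ witnessed, for each pair $a,b$, by $X \subseteq V(a,b)$ and surjective $g \colon X \to V'(a,b)$, and $V' \leqtwoev V''$ witnessed by $X' \subseteq V'(a,b)$ and surjective $g' \colon X' \to V''(a,b)$. Since $g$ is surjective onto $V'(a,b) \supseteq X'$, for every $\Situation' \in X'$ I can choose one preimage $\Situation \in X$ with $g(\Situation) = \Situation'$; let $X''$ be the set of these chosen preimages. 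Then $g$ restricted to $X''$ is a bijection onto $X'$, so $g' \circ g \colon X'' \to V''(a,b)$ is surjective, and the chain $\Situation \leqs g(\Situation) \leqs g'(g(\Situation))$ yields the required inequality by transitivity of $\leqs$.

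The main obstacle is antisymmetry: assuming $V \leqtwoev V'$ and $V' \leqtwoev V$, I need to conclude $V = V'$. Fix $a,b \in \Agents$ and let $g \colon X \to V'(a,b)$ and $h \colon X' \to V(a,b)$ be the witnessing surjections, with $X \subseteq V(a,b)$ and $X' \subseteq V'(a,b)$. Since by Lemma~\ref{lem:ab-eff} the sets $V(a,b)$ and $V'(a,b)$ are finite, the surjections give $|V'(a,b)| \leq |X| \leq |V(a,b)|$ and symmetrically $|V(a,b)| \leq |V'(a,b)|$, forcing $X = V(a,b)$, $X' = V'(a,b)$, and both $g$ and $h$ to be bijections. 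Then $h \circ g$ is a permutation of the finite set $V(a,b)$, and for every $\Situation \in V(a,b)$ we have $\Situation \leqs g(\Situation) \leqs h(g(\Situation))$. Defining the size $|\Situation| := \sum_{d \in \Agents} |\Situation_d|$, the permutation preserves the total size $\sum_{\Situation \in V(a,b)} |\Situation|$, so each inequality must be an equality of sizes; but $\Situation \leqs \Situation''$ together with $|\Situation| = |\Situation''|$ forces $\Situation_d = \Situation''_d$ componentwise, hence $\Situation = g(\Situation)$. Therefore $g$ is the identity and $V(a,b) = V'(a,b)$.

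Finally, I need to address the $\gs$ component, which the definition of $\leqtwoev$ does not explicitly mention; I will either note that the relation is insensitive to it (and restrict attention to the pair entries in the remaining proofs) or add the natural clause $V(\gs) \leqs V'(\gs)$ and handle it by exactly the same reflexivity/transitivity/antisymmetry reasoning, since $V(\gs)$ is a single gossip situation and $\leqs$ is already known to be a partial order. Either way, combining the three properties across all pairs $(a,b)$ establishes that $\leqtwoev$ is a partial order on $\EVVp$.
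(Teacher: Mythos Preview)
Your proof is correct and follows the same three-part structure as the paper's. The only real difference lies in the antisymmetry step: after establishing that $h\circ g$ is a permutation of the finite set $V(a,b)$ with $\Situation \leqs (h\circ g)(\Situation)$ for every $\Situation$, you conclude via the size function $|\Situation|=\sum_d|\Situation_d|$ and a total-sum-preservation argument, whereas the paper instead uses that a permutation of a finite set has some finite order $k$, producing a chain $\Situation \leqs (h\circ g)(\Situation)\leqs\cdots\leqs(h\circ g)^k(\Situation)=\Situation$ and invoking antisymmetry of $\leqs$. Both tricks are standard and equally short. Your explicit remark about the $\gs$ component is a point of extra care that the paper simply omits; note that on $\EVVp$ the entry $V(\gs)=\CSequence(\init)$ is in fact determined by the diagonal entries $V(a,a)$ (since every situation in $V(a,a)$ agrees with $\CSequence(\init)$ on the $a$-coordinate), so antisymmetry does hold as stated without amending the definition.
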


\begin{proof}
\mbox{}

\NI
(Reflexivity) For any epistemic pair-view $V$, we have $V \leqtwoev V$, because for each $a,b \in \Agents$ 
we can pick $V(a,b)$ as $X$ and the identity function on $V(a,b)$ as $g$.
\II

\NI
(Transitivity) Suppose $V, V', V''$ are three epistemic pair-views such that $V \leqtwoev V'$ and $V' \leqtwoev V''$.
Then, from the definition of $\leqtwoev$,
for any $a,b \in \Agents$ there exist $X \subseteq V(a,b)$, $Y \subseteq V'(a,b)$, and surjective functions $g: X \to V'(a,b)$ and $h: Y \to V''(a,b)$. Let $Z = \{ \Situation \in X \mid g(\Situation) \in Y \}$. 
Note that $g|_Z : Z \to Y$, i.e.~the restriction of $g$ to $Z$, is surjective.
The composition $g|_Z \circ h : Z \to V''(a,b)$ is also surjective and for any gossip situation $\Situation \in Z$ the following holds
$\Situation \leqs g|_Z(\Situation) \leqs h(g|_Z(\Situation)) = (g|_Z\circ h)(\Situation)$.
\II

\NI
(Antisymmetry)  
Suppose $V, V'$ are two epistemic pair-views such that $V \leqtwoev V'$ and $V' \leqtwoev V$.
Then, from the definition of $\leqtwoev$,
for any $a,b \in \Agents$ there exist $X \subseteq V(a,b)$, $Y \subseteq V'(a,b)$, 
and surjective functions $g: X \to V'(a,b)$ and $h: Y \to V(a,b)$.
Let $Z = \{ \Situation \in X \mid g(\Situation) \in Y \}$. 
Note that $g|_Z : Z \to Y$, i.e.~the restriction of $g$ to $Z$, is surjective.
Moreover, $g|_Z \circ h : Z \to V(a,b)$ is also surjective, and because $Z \subseteq V(a,b)$ is finite,  
$Z = V(a,b)$, $g|_Z = g$, and $g \circ h$ is a permutation on $V(a,b)$. 
Similarly we can show that $Y = V'(a,b)$.
Since $(g \circ h)$ is a permutation on a finite set, 
there exists $k$ such that $(g \circ h)^k$ is the identity function on $V(a,b)$.%

Note that for any $\Situation \in V(a,b)$, we have $\Situation \leqs (g \circ h)(\Situation)$,
because $\Situation \leqs g(\Situation) \leqs h(g(\Situation)))$.
Now consider the sequence:
 $\Situation \leqs (g \circ h)(\Situation)\leqs (g \circ h)^2(\Situation)\leqs \ldots \leqs (g \circ h)^k(\Situation) = \Situation$.
In fact, all of the elements in this sequence have to be the same, because $\leqs$ is a partial order.
In particular, this shows that $(g \circ h)(\Situation) = \Situation$. 
Therefore, $g \circ h$ is the identity function on $V(a,b)$. 
Now, for any $\Situation \in V(a,b)$ we have that
$\Situation \leqs g(\Situation) \leqs h(g(\Situation)) = (g\circ h)(\Situation) = \Situation$, so $g$ is the identity function as well.
This shows that $V(a,b) = V'(a,b)$ for all $a,b \in \Agents$.
\end{proof}

The next lemma formalizes the intuition that information captured by the 
epistemic pair-view grows along a call sequence.

\begin{lemma}\label{lem:monoev-common}
For all two call sequences such that $\CSequence \sqsubseteq \CSequenced$ we have
$\EVp{\CSequence} \leqtwoev \EVp{\CSequenced}$.
\end{lemma}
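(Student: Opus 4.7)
The plan is to prove this by induction on $|\CSequenced| - |\CSequence|$, using transitivity and reflexivity of $\leqtwoev$ established in Lemma \ref{lem:partial-order-common}. The base case is trivial (reflexivity), and the inductive step reduces to showing that $\EVp{\CSequence} \leqtwoev \EVp{\CSequence.\Call}$ for an arbitrary call $\Call$. So the whole content of the lemma lies in a single one-step monotonicity claim.

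To prove this one-step claim, I fix a pair $a,b \in \Agents$ and consider two cases depending on how $\Call$ relates to $\{a,b\}$. If $\Call$ does not involve both of them, then at least one of $a$, $b$ is not in $\Call$; by Definition \ref{def:model}(i) this gives $\CSequence.\Call \sim_{\{a,b\}} \CSequence$, hence $\EVp{\CSequence.\Call}(a,b) = \EVp{\CSequence}(a,b)$, and I can take $X = \EVp{\CSequence}(a,b)$ with $g$ the identity; the required inequality $\Situation \leqs g(\Situation)$ is then immediate. If instead $a \ne b$ and $\Call = ab$, then Lemma \ref{lem:ab-composition} characterizes $\EVp{\CSequence.\Call}(a,b)$ as exactly the image of
\[
X := \{\Situation \in \EVp{\CSequence}(a,b) \mid \Call(\Situation)_a = \Call(\CSequence(\init))_a \text{ and } \Call(\Situation)_b = \Call(\CSequence(\init))_b\}
\]
under $g(\Situation) := \Call(\Situation)$. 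This $g$ is surjective by construction, and since applying a call never shrinks any agent's set of familiar secrets, $\Situation \leqs \Call(\Situation) = g(\Situation)$ holds for every $\Situation \in X$, as required by the definition of $\leqtwoev$.

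The remaining case is $a = b$, where $\EVp{\CSequence}(a,a) = \EV{\CSequence}(a)$ by the coincidence $\sim_{\{a,a\}} = \sim_a$. Here the same two-case analysis works, but the role of Lemma \ref{lem:ab-composition} is played by its single-agent analogue, essentially Corollary~2 of \cite{AW16} (which is precisely what is invoked in the proof of Corollary \ref{cor:ev1} for the diagonal entries of the epistemic pair-view). The construction of $X$ and $g$ is entirely parallel.

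I do not expect any serious obstacle: the heavy lifting has already been done in Lemma \ref{lem:ab-composition} and in \cite{AW16}, and all that remains is to package their content into the set-and-surjection format demanded by $\leqtwoev$. The only point that deserves a moment's care is verifying that the same witnesses $(X,g)$ built for $\CSequence \sqsubseteq \CSequence.\Call$ can be composed along the inductive chain to yield witnesses for the full extension $\CSequence \sqsubseteq \CSequenced$, but this is exactly what transitivity of $\leqtwoev$ (already proved via composition of the pointwise functions) delivers for free.
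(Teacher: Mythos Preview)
Your proposal is correct and takes essentially the same approach as the paper: both extract the witnesses $(X,g)$ required by the definition of $\leqtwoev$ from the compositional description of $\EVp{\CSequence.\Call}(a,b)$ in Lemma~\ref{lem:ab-composition}. The only difference is organizational: you proceed one call at a time and invoke transitivity of $\leqtwoev$ (Lemma~\ref{lem:partial-order-common}) to chain the steps, whereas the paper iterates Lemma~\ref{lem:ab-composition} over the whole extension $\CSequence'$ at once, writes down a closed form for $\EVp{\CSequence.\CSequence'}(a,b)$, and reads $X$ and $g$ off that directly.
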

\begin{proof}
Let $\CSequenced = \CSequence.\CSequence'$. Take $a,b \in \Agents$.
By a repeated application of Lemma \ref{lem:ab-composition} we get
$\EVp{\CSequence.\CSequence'}(a,b) = \{\CSequence'(\Situation)\ |\ \Situation \in \EVp{\CSequence}(a,b)$ $\mbox{ and } \forall {\CSequence'' \sqsubseteq \CSequence'}\ (\CSequence''(\Situation)_a = \CSequence''(\CSequence(\init))_a 
\wedge \CSequence''(\Situation)_b = \CSequence''(\CSequence(\init))_b) \}$. 
It suffices then to pick $X = \{ \Situation \in \EVp{\CSequence}(a,b) \mid \forall {\CSequence'' \sqsubseteq \CSequence'}\
(\CSequence''(\Situation)_a = \CSequence''(\CSequence(\init))_a \wedge \CSequence''(\Situation)_b = \CSequence''(\CSequence(\init))_b) \}$
and set $g(\Situation) = \CSequence'(\Situation)$ for all $\Situation \in X$.
It is easy to check that such $g : X \to \EVp{\CSequenced}$ is surjective,
so $\EVp{\CSequence} \leqtwoev \EVp{\CSequenced}$, as claimed.
\end{proof}

We can now draw the following useful conclusion.

\begin{lemma}\label{lem:redundant-common}
  Suppose that $\CSequence$ is pair-epistemically redundant.  Then a prefix
  $\CSequence_{1}.\Call$ of it exists such that $\CSequence_1$ is
  pair-epistemically non-redundant and
  $\EVp{\CSequence_{1}.\Call} = \EVp{\CSequence_{1}}$.
\end{lemma}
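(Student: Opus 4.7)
The plan is to pick $\CSequence_1.\Call$ to be the shortest pair-epistemically redundant prefix of $\CSequence$. Such a shortest prefix exists because $\CSequence$ itself is pair-epistemically redundant, and the empty sequence is vacuously non-redundant, so this shortest prefix has length at least one and thus ends with some call, which I call $\Call$. By the minimality of the choice, every strictly shorter prefix of $\CSequence$ is pair-epistemically non-redundant; in particular, $\CSequence_1$ is pair-epistemically non-redundant, giving the first half of the conclusion.

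It remains to show $\EVp{\CSequence_1.\Call} = \EVp{\CSequence_1}$. Since $\CSequence_1.\Call$ is pair-epistemically redundant, by definition there is a prefix of $\CSequence_1.\Call$ of the form $\CSequence'.\CSequence''$ with $\CSequence'' \neq \epsilon$ and $\EVp{\CSequence'} = \EVp{\CSequence'.\CSequence''}$. If $\CSequence'.\CSequence''$ were a proper prefix of $\CSequence_1.\Call$, it would already be a prefix of $\CSequence_1$, witnessing pair-epistemic redundancy of $\CSequence_1$ and contradicting minimality. So $\CSequence'.\CSequence'' = \CSequence_1.\Call$, which forces $\CSequence'$ to be a (not necessarily proper) prefix of $\CSequence_1$.

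Now I split into two subcases. If $\CSequence' = \CSequence_1$, then $\CSequence'' = \Call$ and the hypothesis $\EVp{\CSequence'} = \EVp{\CSequence'.\CSequence''}$ directly gives $\EVp{\CSequence_1} = \EVp{\CSequence_1.\Call}$. Otherwise $\CSequence'$ is a proper prefix of $\CSequence_1$, and the monotonicity Lemma \ref{lem:monoev-common} yields
\[
\EVp{\CSequence'} \leqtwoev \EVp{\CSequence_1} \leqtwoev \EVp{\CSequence_1.\Call} = \EVp{\CSequence'.\CSequence''} = \EVp{\CSequence'}.
\]
By the antisymmetry of $\leqtwoev$ (Lemma \ref{lem:partial-order-common}), all three pair-views coincide, and in particular $\EVp{\CSequence_1} = \EVp{\CSequence_1.\Call}$, as required.

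The main subtlety I expect is the second subcase: it is tempting to assume that the redundancy witness split is exactly $\CSequence_1.\Call$, but in fact $\CSequence'$ may sit strictly inside $\CSequence_1$. This is precisely where the monotonicity of epistemic pair-views along prefix extensions together with antisymmetry of $\leqtwoev$ become essential — without them, the equality $\EVp{\CSequence'} = \EVp{\CSequence_1.\Call}$ would not immediately propagate to $\EVp{\CSequence_1}$.
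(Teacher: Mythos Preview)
Your proof is correct and follows essentially the same route as the paper: both pick the shortest pair-epistemically redundant prefix and then use the monotonicity of $\EVp{\cdot}$ along prefixes (Lemma~\ref{lem:monoev-common}) together with antisymmetry of $\leqtwoev$ (Lemma~\ref{lem:partial-order-common}) to squeeze out the desired equality. The only cosmetic difference is that the paper writes the shortest redundant prefix as $\CSequence_1.\CSequence_2$ with $\EVp{\CSequence_1}=\EVp{\CSequence_1.\CSequence_2}$ and outputs $\CSequence_1.\Call_1$ (the first call of $\CSequence_2$), whereas you output the last call of that same prefix; the sandwich argument is the same, and in fact forces these two choices to coincide.
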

\begin{proof}
  Let $\CSequence_1.\CSequence_2$ be the shortest prefix of
  $\CSequence$ such that
  $\EVp{\CSequence_1} = \EVp{\CSequence_1.\CSequence_2}$.  Then
$\CSequence_1$   is pair-epistemically non-redundant. Let
  $\CSequence_2 = \Call_1.\ldots.\Call_l$.  By Lemma \ref{lem:monoev-common}
  we have
\[
\begin{array}{l}
  \EVp{\CSequence_1} \leqtwoev \EVp{\CSequence_1.\Call_1} \leqtwoev
  \EVp{\CSequence_1.\Call_1.\Call_2} \leqtwoev \ldots \leqtwoev \\
  \EVp{\CSequence_1.\Call_1.\Call_2.\ldots.\Call_l} =
  \EVp{\CSequence_1.\CSequence_2} = \EVp{\CSequence_1}.
\end{array}
\]  
Since $\leqtwoev$  is a partial order, $\EVp{\CSequence_1.\Call_1} = \EVp{\CSequence_1}$
  holds.
\end{proof}

Finally we can establish the desired result. In the proof
we shall use the following observation.
\begin{theorem}[Stuttering] \label{thm:ck-stuttering}
Suppose that $\CSequence := \CSequence_1. \Call. \CSequence_2$ and
$\CSequenced := \CSequence_1. \Call. \Call. \CSequence_2$.
Then for all formulas $\phi \in {\mathcal L}^{ck}$,
$(\Model, \CSequence) \models \phi$ iff $(\Model, \CSequenced) \models \phi$.
\end{theorem}
\begin{proof}
This is a direct consequence of the corresponding 
Stuttering Theorem 3 from \cite{AW16} and Note \ref{not:G}.
\end{proof}

\begin{theorem}[Decidability of Termination] \label{thm:ck-termination} 

  Given a common knowledge gossip protocol it is decidable to
  determine whether it always terminates.
\end{theorem}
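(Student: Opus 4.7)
The plan is to reduce termination of the protocol to a finite reachability check by building a pruned execution tree $T$ whose internal nodes are only pair-epistemically non-redundant call sequences. Concretely, let $T$ have root $\epsilon$; for every pair-epistemically non-redundant node $\CSequence$ in $T$ and every call $\Call$ whose protocol guard is true at $\CSequence$, let $\CSequence.\Call$ be a child of $\CSequence$; pair-epistemically redundant nodes are kept but not expanded. By Lemma \ref{lem:ck-finite-non-epi-redundant}, the internal (non-redundant) nodes of $T$ form a finite set, and since each such node has at most $|\Agents|^2$ children, $T$ is finite. Moreover $T$ is effectively constructible: the pair-epistemic view of every node can be computed via Lemma \ref{lem:ab-eff}, pair-epistemic non-redundancy can be checked by comparing the view of a node with the views of its prefixes, and since the protocol's guards are formulas in ${\mathcal L}_{wn}$, Theorem \ref{thm:ck-decidability} lets us decide which calls are enabled at each node.

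The key claim is: \emph{the protocol terminates if and only if no leaf of $T$ has any enabled call}. This condition is obviously decidable once $T$ has been built, so the theorem follows from the claim. For the easy direction, assume every leaf of $T$ is ``guard-dead'' and suppose for contradiction that the protocol admits an infinite execution $\Call_1.\Call_2.\ldots$. By Lemma \ref{lem:ck-finite-non-epi-redundant} applied to its prefixes, some prefix is pair-epistemically redundant; let $k$ be the smallest such index. Then $\Call_1.\ldots.\Call_{k-1}$ is non-redundant and $\Call_k$ is enabled at it, so by construction $\Call_1.\ldots.\Call_k$ belongs to $T$ as a (redundant) leaf. But the call $\Call_{k+1}$ is enabled at this leaf, contradicting the hypothesis.

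For the harder direction, suppose that some leaf $\CSequence$ of $T$ has an enabled call. A non-redundant leaf would automatically have been expanded, so $\CSequence$ must be redundant. By Lemma \ref{lem:redundant-common}, $\CSequence$ has a prefix $\CSequence_1.\Call^{*}$ with $\CSequence_1$ non-redundant and $\EVp{\CSequence_1.\Call^{*}} = \EVp{\CSequence_1}$; as this prefix lies inside a protocol execution, $\Call^{*}$ is enabled at $\CSequence_1$. Iterating Corollary \ref{cor:ev1} then gives $\EVp{\CSequence_1.(\Call^{*})^n} = \EVp{\CSequence_1}$ for every $n \geq 0$, so by Lemma \ref{lem:ck-ev} the guard of $\Call^{*}$ (a formula in ${\mathcal L}_{wn}$) remains true at every $\CSequence_1.(\Call^{*})^n$. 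Therefore $\CSequence_1.(\Call^{*})^\omega$ is an infinite execution of the protocol, so the protocol does not terminate. The main obstacle in the whole argument is precisely this last step: turning a single redundant leaf of $T$ into an explicit infinite execution, which is what forces us to combine Lemma \ref{lem:redundant-common} (locating a one-step redundancy with a non-redundant anchor), Corollary \ref{cor:ev1} (propagating the fixed pair-view under repetition of $\Call^{*}$), and Lemma \ref{lem:ck-ev} (lifting equality of pair-views to equality of truth for the guard).
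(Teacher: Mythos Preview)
Your proof is correct and follows essentially the same strategy as the paper: reduce non-termination to the existence of a protocol-generated sequence $\CSequence_1.\Call$ with $\CSequence_1$ pair-epistemically non-redundant and $\EVp{\CSequence_1.\Call}=\EVp{\CSequence_1}$, and then search for such sequences over the finite set guaranteed by Lemma~\ref{lem:ck-finite-non-epi-redundant}. Your tree $T$ is just a convenient packaging of that finite search; the paper states the characterisation directly without building $T$. The one genuine technical difference is in the ``hard'' direction: the paper shows that the guard of $\Call$ remains true along $\CSequence_1.\Call^{\omega}$ by first using Lemma~\ref{lem:ck-ev} once and then invoking the Stuttering Theorem~\ref{thm:ck-stuttering}, whereas you iterate Corollary~\ref{cor:ev1} to get $\EVp{\CSequence_1.(\Call)^n}=\EVp{\CSequence_1}$ for all $n$ and then apply Lemma~\ref{lem:ck-ev} uniformly. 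Both routes are valid; yours has the mild advantage of not needing the Stuttering Theorem at all.
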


\begin{proof}
  We first prove that a gossip protocol may fail to terminate iff it
  can generate a call sequence $\CSequence.\Call$ such that
  $\CSequence$ is pair-epistemically non-redundant and
  $\EVp{\CSequence.\Call} = \EVp{\CSequence}$. 

\NI 
$(\Ra)$ Let $\overline{\CSequence}$ be an infinite sequence of calls
generated by the protocol. There are only finitely many pair-epistemic
views, so some prefix $\CSequence$ of $\overline{\CSequence}$ is
pair-epistemically redundant.  The claim now follows by Lemma \ref{lem:redundant-common}.
\II

\NI
$(\La)$ Suppose that the protocol generates a sequence of calls
$\CSequence.\Call$ such that $\CSequence$ is pair-epistemically
non-redundant and $\EVp{\CSequence.\Call} = \EVp{\CSequence}$.

Let $\phi$ be the guard associated with the call $\Call$, i.e., 
$\phi \to \Call$ is a rule used in the considered protocol.
By assumption $(\Model, \CSequence) \models \phi$,
so by Lemma \ref{lem:ck-ev}
$(\Model, \CSequence.\Call) \models \phi$. 
By the repeated application of the Stuttering Theorem \ref{thm:ck-stuttering}
we get that for all $i \geq 1$, $(\Model, \CSequence.\Call^{i}) \models \phi$.
Consequently, $\CSequence.\Call^\omega$ is an infinite sequence of
calls that can be generated by the protocol. 
\II

The above equivalence shows that determining whether the protocol
always terminates is equivalent to checking that it cannot generate a
call sequence $\CSequence.\Call$ such that $\CSequence$ is
pair-epistemically non-redundant and $\EVp{\CSequence.\Call} = \EVp{\CSequence}$.

But given a call sequence, by the Decidability of Semantics Theorem
\ref{thm:ck-decidability}, it is decidable to determine whether it can be
generated by the protocol and by Lemma \ref{lem:ab-eff} it is decidable
to determine whether a call sequence is pair-epistemically
non-redundant. Further, by Lemma \ref{lem:ck-finite-non-epi-redundant}
there are only finitely many pair-epistemically non-redundant call
sequences, so the claim follows.
\end{proof}

\section{Conclusions}
\label{sec:conclusions}

We studied here various aspects of common knowledge in the context of
a natural epistemic logic used to express and reason about distributed
epistemic gossip protocols.  We showed that the semantics and truth in
this logic are decidable in the absence of nested modalities.  The
first result implies that the gossip protocols relying on such a use
of the common knowledge operator are implementable and the second one
that their partial correctness is decidable, since partial correctness
of these gossip protocols can be expressed as a formula of the
considered language.  Further, we proved that the termination of these
gossip protocols is decidable, as well.

There are a number of interesting open problems related to this work.
An obvious question is whether our results can be extended to formulas
that admit nested modalities.

In Corollary \ref{cor:ck} we showed that under certain conditions
common knowledge for two agents is equivalent to the 4th fold iterated
knowledge. An intriguing question is whether this result
holds for arbitrary call sequences and arbitrary formulas. If not,
is then common knowledge always equivalent to some finite
iterated knowledge?

Finally, it would be interesting to clarify which formulas two agents
can commonly know, i.e., given a call sequence $\CSequence$ to
characterize the formulas $\phi$ for which
$(\Model, \CSequence) \models C_{ab} \phi$ holds.  Example
\ref{exa:2agents} indicates that this problem is non-trivial even
without any call being performed.

\section*{Acknowledgments}
We thank the referees for helpful comments.  First author he was
partially supported by NCN grant 2014/13/B/ST6/01807.  The second
author was partially supported by EPSRC grants EP/M027287/1 and EP/P020909/1.

\bibliographystyle{eptcs}

\bibliography{new}

\appendix

\section{Example: a common knowledge protocol}
\label{section:graphs2}

To illustrate gossip protocols that employ the common knowledge
operator assume  that the agents are
nodes of an undirected connected graph $(V,E)$ and that the calls can
take place only between pairs of agents connected by an edge. 
Let $N_i$ denote the set of neighbours of node $i$. 

Consider a gossip protocol with the following program for agent $i$
(we use here the syntax introduced in  \cite{AGH16}):
\[
*[[]_{j \in N_i, B \in \mathsf{P}} F_i B \wedge \neg C_{ij} F_j B \to (i, j)].
\]

Informally, agent $i$ calls a neighbour $j$ if $i$ is familiar with
some secret (here $B$) and there is no common knowledge between $i$
and $j$ that $j$ is familiar with this secret.

\bfe{Partial correctness} of a protocol states that upon its termination the formula
$\bigwedge_{i, j \in \mathsf{A}} F_i J$ holds.

To prove partial correctness of the above protocol consider the exit condition 
\[
\bigwedge_{(i,j) \in E} \bigwedge_{B \in \mathsf{P}} (F_i B \to C_{ij} F_j B).
\]
For all agents $i$ and $j$ and
secrets $B$, the formula $C_{ij} F_j B \to F_j B$ is true,
so the exit condition
implies
\[
\bigwedge_{(i,j) \in E} \bigwedge_{B \in \mathsf{P}} (F_i B \to F_j B).
\]

Consider now an agent $i$ and the secret $J$ of agent $j$. Let
$j = i_1, \LL, i_h = i$ be a path that connects $j$ with $i$. The
above formula implies that for $g \in \{1, \LL, h-1\}$ we have
$\bigwedge_{B \in \mathsf{P}} (F_{i_g} B \to F_{i_{g+1}} B)$. By
combining these $h-1$ formulas we get
$\bigwedge_{B \in \mathsf{P}} (F_{j} B \to F_{i} B)$. But $F_j J$ is
true, so we conclude $F_i J$.  Consequently
$\bigwedge_{i, j \in \mathsf{A}} F_i J$, as desired.

To prove termination we need the following observation.

\begin{lemma}
\label{lem:ij}
For all call sequences $\CSequence.(i,j)$ and secrets $B$
\[
\mbox{$(\Model, \CSequence.(i,j)) \models F_i B \land F_j B$ implies
$(\Model, \CSequence.(i,j)) \models C_{ij} (F_i B \land F_j B)$.}
\]
\end{lemma}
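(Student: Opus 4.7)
The plan is to identify a semantic invariant $P$ on call sequences that holds at $\CSequence.(i,j)$, propagates along $\sim_{\{i,j\}}$, and implies $F_i B \land F_j B$. Define $P(\CSequenced)$ to mean that there is a position $t$ in $\CSequenced$ at which the call $(i,j)$ occurs and both $B \in \CSequenced^{t}(\init)_i$ and $B \in \CSequenced^{t}(\init)_j$ hold, where $\CSequenced^{t}$ denotes the prefix of $\CSequenced$ of length $t$. Then $P(\CSequence.(i,j))$ holds with $t$ being the final call, by the hypothesis; and $P(\CSequenced)$ implies $(\Model, \CSequenced) \models F_i B \land F_j B$ by monotonicity of the gossip situation under call-sequence extension (Note~\ref{not:monoseq-common}), since once $B$ lies in the $i$- and $j$-components at position $t$ it remains there at the end. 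It therefore suffices to prove that $P$ is preserved by $\sim_i$ and, symmetrically, by $\sim_j$, for together these yield preservation by the transitive closure $\sim_{\{i,j\}}$.

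For preservation by $\sim_i$, I would use Theorem~\ref{thm:equiv} to rewrite $\CSequenced_1 \sim_i \CSequenced_2$ as $(\CSequenced_1)_i = (\CSequenced_2)_i$. Assume $P(\CSequenced_1)$ via an occurrence of $(i,j)$ at position $t_1$, say the $k$-th $i$-call of $\CSequenced_1$. Then the $k$-th step of $(\CSequenced_1)_i$ is labelled $(i,j)$, and the situation $\Situation$ recorded right after this step satisfies $\Situation_i = \CSequenced_1^{t_1}(\init)_i$ and $\Situation_j = \CSequenced_1^{t_1}(\init)_j$, by the ``if $d \in \Call$'' clause in the view definition of Section~\ref{sec:alternative}; both contain $B$. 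Since the $i$-views coincide, the $k$-th $i$-call of $\CSequenced_2$ is also $(i,j)$, at some position $t_2$, with the same $\Situation$ recorded after it; the same clause now yields $\Situation_i = \CSequenced_2^{t_2}(\init)_i$ and $\Situation_j = \CSequenced_2^{t_2}(\init)_j$, so these also contain $B$ and $P(\CSequenced_2)$ holds with witness $t_2$.

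The main subtlety is that the witnesses $t_1$ and $t_2$ may occur at very different positions in the two sequences, and the two actual gossip situations may disagree considerably on coordinates outside $\{i,j\}$; what makes the argument go through is that the view definition copies the \emph{actual} secret sets for the two agents directly involved in each call, which is exactly what transports the conjunction ``$B$ in both the $i$- and $j$-components'' between $\sim_i$-equivalent call sequences. Iterating the two preservation steps along an $(\sim_i \cup \sim_j)$-chain then yields $P(\CSequenced)$ for every $\CSequenced$ with $\CSequence.(i,j) \sim_{\{i,j\}} \CSequenced$, and the two observations above combine to give $(\Model, \CSequence.(i,j)) \models C_{ij}(F_i B \land F_j B)$.
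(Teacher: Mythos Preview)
Your proof is correct and follows essentially the same route as the paper's: both arguments walk along a $(\sim_i \cup \sim_j)$-chain and, via the Equivalence Theorem and the view definition of Section~\ref{sec:alternative}, track an occurrence of the call $(i,j)$ together with the actual $i$- and $j$-components recorded at that step of the view, which transports the fact that both contain $B$. Your explicit invariant $P$ packages the argument slightly more cleanly than the paper's formulation, but the underlying idea is the same.
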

\begin{proof}
  Suppose that $(\Model, \CSequence.(i,j)) \models F_i B \land F_j B$.
  Take some $\CSequenced$ such that
  $\CSequence.(i,j) \sim_{\{i,j\}} \CSequenced$. So there exists a
  sequence of call sequences $\CSequence_1, \LL, \CSequence_k$ such
  that
\[
\CSequence.(i,j) = \CSequence_1 \sim_{m_1} \CSequence_2 \sim_{m_2} \LL \sim_{m_k} \CSequence_k = \CSequenced,
\] 
where each $m_h$ is $i$ or $j$.

  By the repeated use of the Equivalence Theorem \ref{thm:equiv} each of the
  call sequences $\CSequence_1, \LL, \CSequence_k$ contains the call
  $(i,j)$ that corresponds with the last call of $\CSequence.(i,j)$.
  Let $\CSequenced_1.(i,j), \LL, \CSequenced_k.(i,j)$ be the
  corresponding prefixes of $\CSequence_1, \LL, \CSequence_k$.

  By the Equivalence Theorem  \ref{thm:equiv}
and the definition of a view given there we also have
\[
\CSequenced_1.(i,j)_{i} = \CSequenced_1.(i,j)_{j} = 
\CSequenced_2.(i,j)_{i} = \CSequenced_2.(i,j)_{j}.
\]
By the assumption this implies $(\Model, \CSequenced_2.(i,j)) \models F_i B \land F_j B$,
since $\CSequenced_1.(i,j) = \CSequence.(i,j)$.

Repeating this procedure we conclude that
$(\Model, \CSequenced_k.(i,j)) \models F_i B \land F_j B$
and hence $(\Model, \CSequenced) \models F_i B \land F_j B$.
This implies the claim.
\end{proof}

Now, by the definition of semantics for all call sequences
$\CSequence.(i,j)$ and secrets $B$,
$(\Model, \CSequence) \models F_i B$ implies
$(\Model, \CSequence.(i,j)) \models F_i B \land F_j B$, which implies
by Lemma \ref{lem:ij} that
$(\Model, \CSequence.(i,j)) \models C_{ij} (F_i B \land F_j B)$ and hence
$(\Model, \CSequence.(i,j)) \models C_{ij} F_j B$.  This
shows that after each call $(i, j)$ the size of the set
$\{(i,j,B) \mid \neg C_{ij} F_j B\}$ decreases.

\end{document}
\section{Omitted Proofs}
\NI
\textbf{Proof of Lemma \ref{lem:Ra}}.

\begin{proof}
  Since $\CSequence$ does not contain the call $ab$, the views of $a$
  of $\CSequence$ and $R_{ab}(\CSequence)$ have the
  same sequences of the $a$-calls.

  By definition agent $a$ does not learn the secret of $b$ through the
  $b$-inessential calls for $a$. So the replacement (or possibly
  deletion) of $b$ in all $b$-inessential calls has no effect on the
  status of this secret in the views of $a$ of both sequences.

  Let $bc$ be the first $b$-inessential for $a$ call in $\CSequence$
  and suppose that $ad$ is the first $a$-call in $\CSequence$ it leads
  to.  Let $\CSequence'$ be the outcome of the first step of producing
  $R_{ab}(\CSequence)$.  So it is obtained from $\CSequence$ by
  replacing $bc$ by $cd$ if $c \neq d$, and by deleting $bc$
  otherwise.  We show that the views of the agent $a$ of the call
  sequences $\CSequence$ and $\CSequence'$ are the same.  First, note
  that no $a$-call earlier than $ad$ can be effected by the change in
  $\CSequence$, because $ad$ is the first $a$-call $bc$ leads to.
  
  Consider the $c \neq d$ case first.  The set of secrets an agent is
  familiar with at the same point in $\CSequence$ and $\CSequence'$
  may differ as a result of $bc$ being replaced by $cd$.  However, we
  argue that it is impossible for the agent $a$ to notice this
  difference.  Let us consider a call $ax$ in $\CSequence$, where
  $x \in \Agents$ and the sets of secrets, $S$ and $S'$, agent $e$ is
  familiar before $ax$ is made in $\CSequence$ and $\CSequence'$,
  respectively.
  \begin{itemize}
  \item If $ax$ takes place before $bc$ then $S$ and $S'$ are the same.
  \item If $ax$ is in-between the calls $bc$ and $ad$
  then again the sets $S$ and $S'$ are the same, because 
  $ad$ is the first $a$-call that $bc$ leads to.
  \item If $ax$ is the call $ad$ then we have the following.
  First, just before this call $a$ is already familiar with all the secrets $b$ 
  is familiar with before the call $bc$ is made in $\CSequence$,
  because $ad$ is the first $a$-call $bc$ leads to.
  So $a$ is still familiar with all these secrets in $\CSequence'$. Thus
 the only secrets that may be lost by replacing $bc$ by $cd$ 
  are the ones that $c$ is familiar with at that point.
  However, these secrets are passed to $d$ and $a$ 
  still learns them all in $\CSequence'$ through the call $ad$.
\item If $ax$ takes place after the call $ad$ then the difference
  between $S$ and $S'$ could be at most in the set of secrets $a$
  learned through the call $bc$.  However, $a$ already knows these
  secrets after the call $ad$ is made, so $S = S'$.\footnote{Note
    that this crucially depends on the fact that from any call each
    involved agent learns the union of the sets of secrets the callers
    are familiar with and not the set of secrets the other caller is
    familiar with.}
  \end{itemize}
The reasoning for the $c = d$ case is completely analogous and omitted.

By iterating the above argument, starting with
$\CSequence$, we obtain $R_{ab}(\CSequence)$ without affecting the
view of the agent $a$.
\end{proof}

\III

\NI
\textbf{Proof of Corollary \ref{cor:aba}}.

\begin{proof}
Take $\CSequenced$ such that $\CSequence \sim_a
\CSequenced$. Then $\CSequenced$ is of the form
$\CSequence_1, ac, \CSequence_2, bc,$ $\CSequence_3, ac, \CSequence_4$. Next take
$\CSequenced_1$ such that $\CSequenced_1 \sim_b \CSequenced$. Then
$\CSequenced$ is of the form $\CSequence_5, ac, \CSequence_6, bc, \CSequence_7$. Next take
$\CSequenced_2$ such that $\CSequenced_2 \sim_a \CSequenced_1$. Then
$\CSequenced_2$ is of the form $\CSequence_8, ac, \CSequence_9$.  So
$(\Model, \CSequenced_2) \models F_c A$, which implies the claim.

The next two claims follow since for all formulas
$\phi \in {\mathcal L}^{ck}$, $K_{aba} \phi$ implies both
$K_{ab} \phi$ and $K_{a} \phi$.

Now note that for all sequences $t \in \{a, b\}^*$ that extend $abab$
or $baba$ and all formulas $\phi \in {\mathcal L}^{ck}$, $K_t \phi$
implies $K_{abab} \phi$ or $K_{baba} \phi$.  So Theorem
\ref{thm:abab1} implies that for all such sequences $s$ and all call
sequences $\CSequence$ that do not contain the call $ab$
\[
(\Model, \CSequence) \not \models K_{s} F_c A.
\]
Indeed, the formula $F_c A$ is not always true.

Further, by Note \ref{not:G} we also have for such call sequences
$\CSequence$
\[
(\Model, \CSequence) \not \models C_{ab} F_c A.
\]
In particular, this holds for the above call sequence
$\CSequence = ac, bc, ac$. 
\end{proof}

\III

\NI
\textbf{Proof of Lemma \ref{lem:ab-finite}}.

\begin{proof}
  Consider an $ab$-redundant free call sequence $\CSequenced$ such that
  $\CSequence \sim_{\{a, b\}} \CSequenced$. Then $\CSequenced$ has the same
  number, say $k$, of calls $ab$ as $\CSequence$.
  
  Associate with $\CSequenced$ the sequence of gossip situations
  $$\CSequenced^0(\init), \CSequenced^1(\init), \ldots,
  \CSequenced^m(\init),$$ 
where $m$ is the length of $\CSequenced$,
  $\CSequenced^0 = \epsilon$, and
  $\CSequenced^k = \Calld_1, \Calld_2, \ldots, \Calld_k$ for
  $k = 1, \dots, m$. This sequence monotonically grows, where we
  interpret the inclusion relation componentwise. Moreover, for all
  calls $\Calld_i$ different from $ab$ the corresponding
  inclusion is strict.  Consequently, $m$, the length of
  $\CSequenced$, is bounded by $k + |\Agents|^{|\Agents|}$, the sum of
  the number of calls $ab$ in $\CSequence$ and of the total number of
  secrets in the gossip situation in which each agent is an expert.

But for each $m$ there are only finitely many call sequences
of length at most $m$.  This concludes the proof.
\end{proof}

\III

\NI
\textbf{Proof of Lemma \ref{lem:ck-ev}}.

\begin{proof}
A straightforward proof by induction shows that for a formula $\psi \in {\mathcal L}_{pr}$ and
arbitrary call sequences $\CSequence'$ and $\CSequenced'$,
\begin{equation}
\mbox{$\CSequence'(\init) = \CSequenced'(\init)$ implies that
  $(\Model, \CSequence') \models \psi$ iff $(\Model, \CSequenced') \models \psi$.}
  \label{equ:2}  
\end{equation}
Since $\EVp{\CSequence}(\gs) = \CSequence(\init)$ and 
$\EVp{\CSequenced}(\gs) = \CSequenced(\init)$, this settles the case for
$\phi = F_a p$.

Next, consider the case of the formulas of the form $C_G \phi$, where
$\phi \in \mathcal L_{wn}$.  

If $|G| = 1$, then $G = \{a\}$ for some $a \in \Agents$ and $C_G$ is
the same as $K_a$.  Since $\EVp{\CSequence} = \EVp{\CSequenced}$
implies $\EV{\CSequence} = \EV{\CSequenced}$, where epistemic view $\EV{}$ is defined as in \cite{AW16},
the claim follows by Lemma 4 of \cite{AW16}.

If $|G| = 2$, then $G = \{a,b\}$ for some $a,b \in \Agents$.
By (\ref{equ:2}) and the definition of $\EVp{\CSequence}$
$$(\Model, \CSequence) \models C_G \phi  \mbox{ iff }
\forall \CSequence' \mbox{  s.t.  } \CSequence'(\init) \in \EVp{\CSequence}(a,b)
, ~(\Model, \CSequence') \models \phi.$$
So the claim follows since $\EVp{\CSequence}(a,b) = \EVp{\CSequenced}(a,b)$.

If $|G| \geq 3$, then by Theorem \ref{thm:3agents} 
both 
$(\Model, \CSequence') \models C_G \phi$ and $(\Model, \CSequenced') \models C_G \phi$
are equivalent to $\models \phi$.

This settles the case for $C_G \phi$.  The remaining cases of
negation and conjunction follow directly by the induction.
\end{proof}

\III

\NI
\textbf{Proof of Lemma \ref{lem:ab-composition}}.

\begin{proof}
\II

\NI
$(\sse)$ Take $\Situation' \in \EVp{\CSequence.\Call}(a,b)$. By the
  definition of $\EVp{\CSequence.\Call}(a,b)$ there exists a call
  sequence $\CSequenced$ such that
  $\CSequenced.\Call \sim_{\{a,b\}} \CSequence.\Call$ and
  $\Situation' = \CSequenced.\Call(\init)$. 
  So
  $\Situation' = \Call(\Situation)$, where
  $\Situation = \CSequenced(\init)$.
  We prove now that
  $\CSequenced \sim_{\{a,b\}} \CSequence$
  and as a result $\Situation = \CSequenced(\init) \in \EVp{\CSequence}(a,b)$. 

  Note that $\CSequenced.\Call \sim_{\{a,b\}} \CSequence.\Call$
  implies that there exists a 
  sequence $t_1 \ldots t_k \in \{a,b\}^*$ such that
  for some call sequences 
  $\CSequenced_1, \CSequenced_2, \ldots, \CSequenced_{k-1}$
we have
  $\CSequenced.\Call \sim_{t_1} \CSequenced_1.\Call \sim_{t_2} \CSequenced_2.\Call \sim_{t_3} \ldots
  \sim_{t_{k-1}} \CSequenced_{k-1}.\Call \sim_{t_k} \CSequence.\Call$. 
  Note that for any $t \in \{a,b\}$ and call sequences
  $\CSequence', \CSequenced'$ we have that
  $\CSequence'.\Call \sim_{t} \CSequenced'.\Call$ implies
  $\CSequence' \sim_{t} \CSequenced'$, because $\sim_{t}$ is the
  minimal relation satisfying the conditions stated in Definition
  \ref{def:model}.  It follows that
  $\CSequenced \sim_{t_1} \CSequenced_1 \sim_{t_2} \CSequenced_2
  \sim_{t_3} \ldots \sim_{t_{k-1}} \CSequenced_{k-1} \sim_{t_k}
  \CSequence$, so by definition
  $\CSequenced \sim_{\{a,b\}} \CSequence$.

  Further, by the definition of the $\sim_c$ relations, we also have
  that for $i \in \{0, \LL, k-1 \}$ both
  $\CSequenced_i.\Call(\init)_a = \CSequenced_{i+1}.\Call(\init)_a$
  and
  $\CSequenced_i.\Call(\init)_b = \CSequenced_{i+1}.\Call(\init)_b$,
  where $\CSequenced_0 = \CSequenced$ and
  $\CSequenced_k = \CSequence$.  So
  $\CSequenced.\Call(\init)_a = \CSequence.\Call(\init)_a$ and
  $\CSequenced.\Call(\init)_b = \CSequence.\Call(\init)_b$.

  But $\Situation = \CSequenced(\init)$, so we get that
  $\Call(\Situation)_a = \Call(\CSequence(\init))_a$ and
  $\Call(\Situation)_b = \Call(\CSequence(\init))_b$.  
\II

\NI 
$(\supseteq)$ Take $\Situation' \in \{\Call(\Situation)\ |\ \Situation \in
\EVp{\CSequence}(a,b), \Call(\Situation)_a =
\Call(\CSequence(\init))_a \mbox{ and } \Call(\Situation)_b = \Call(\CSequence(\init))_b\}$. So for some gossip situation
$\Situation$ we have $\Situation' = \Call(\Situation)$,
$\Situation \in \EVp{\CSequence}(a,b)$,
$\Call(\Situation)_a = \Call(\CSequence(\init))_a$, and
$\Call(\Situation)_b = \Call(\CSequence(\init))_b$.  The fact that $\Situation \in \EVp{\CSequence}(a,b)$
implies that there exists a call sequence $\CSequenced$ such that
$\CSequenced \sim_{\{a,b\}} \CSequence$ and $\Situation = \CSequenced(\init)$.
Now, this and $\Call(\CSequenced(\init))_a = \Call(\Situation)_a = \Call(\CSequence(\init))_a$ imply by definition that
$\CSequenced.\Call \sim_a \CSequence.\Call$, 
so a fortiori
$\CSequenced.\Call \sim_{\{a,b\}} \CSequence.\Call$.
So
$\CSequenced.\Call(\init) \in \EVp{\CSequence.\Call}(a,b)$.
Consequently also $\Situation' \in \EVp{\CSequence.\Call}(a,b)$, because
$\Situation' = \Call(\Situation) = \CSequenced.\Call(\init)$.
\end{proof}
\III

\NI
\textbf{Proof of Lemma \ref{lem:partial-order-common}}.

\begin{proof}
\mbox{}

\NI
(Reflexivity) For any epistemic pair-view $V$, we have $V \leqtwoev V$, because for each $a,b \in \Agents$ 
we can pick $V(a,b)$ as $X$ and the identity function on $V(a,b)$ as $g$.
\II

\NI
(Transitivity) Suppose $V, V', V''$ are three epistemic pair-views such that $V \leqtwoev V'$ and $V' \leqtwoev V''$.
Then, from the definition of $\leqtwoev$,
for any $a,b \in \Agents$ there exist $X \subseteq V(a,b)$, $Y \subseteq V'(a,b)$, and surjective functions $g: X \to V'(a,b)$ and $h: Y \to V''(a,b)$. Let $Z = \{ \Situation \in X \mid g(\Situation) \in Y \}$. 
Note that $g|_Z : Z \to Y$, i.e.~the restriction of $g$ to $Z$, is surjective.
The composition $g|_Z \circ h : Z \to V''(a,b)$ is also surjective and for any gossip situation $\Situation \in Z$ the following holds
$\Situation \leqs g|_Z(\Situation) \leqs h(g|_Z(\Situation)) = (g|_Z\circ h)(\Situation)$.
\II

\NI
(Antisymmetry)  
Suppose $V, V'$ are two epistemic pair-views such that $V \leqtwoev V'$ and $V' \leqtwoev V$.
Then, from the definition of $\leqtwoev$,
for any $a,b \in \Agents$ there exist $X \subseteq V(a,b)$, $Y \subseteq V'(a,b)$, 
and surjective functions $g: X \to V'(a,b)$ and $h: Y \to V(a,b)$.
Let $Z = \{ \Situation \in X \mid g(\Situation) \in Y \}$. 
Note that $g|_Z : Z \to Y$, i.e.~the restriction of $g$ to $Z$, is surjective.
Moreover, $g|_Z \circ h : Z \to V(a,b)$ is also surjective, and because $Z \subseteq V(a,b)$ is finite,  
$Z = V(a,b)$, $g|_Z = g$, and $g \circ h$ is a permutation on $V(a,b)$. 
Similarly we can show that $Y = V'(a,b)$.
Since $(g \circ h)$ is a permutation on a finite set, 
there exists $k$ such that $(g \circ h)^k$ is the identity function on $V(a,b)$.%

Note that for any $\Situation \in V(a,b)$, we have $\Situation \leqs (g \circ h)(\Situation)$,
because $\Situation \leqs g(\Situation) \leqs h(g(\Situation)))$.
Now consider the sequence:
 $\Situation \leqs (g \circ h)(\Situation)\leqs (g \circ h)^2(\Situation)\leqs \ldots \leqs (g \circ h)^k(\Situation) = \Situation$.
In fact, all of the elements in this sequence have to be the same, because $\leqs$ is a partial order.
In particular, this shows that $(g \circ h)(\Situation) = \Situation$. 
Therefore, $g \circ h$ is the identity function on $V(a,b)$. 
Now, for any $\Situation \in V(a,b)$ we have that
$\Situation \leqs g(\Situation) \leqs h(g(\Situation)) = (g\circ h)(\Situation) = \Situation$, so $g$ is the identity function as well.
This shows that $V(a,b) = V'(a,b)$ for all $a,b \in \Agents$.
\end{proof}
\III

\NI
\textbf{Proof of the Decidability of
  Termination Theorem \ref{thm:ck-termination}}.

We shall use the following observation.
\begin{theorem}[Stuttering] \label{thm:ck-stuttering}
Suppose that $\CSequence := \CSequence_1. \Call. \CSequence_2$ and
$\CSequenced := \CSequence_1. \Call. \Call. \CSequence_2$.
Then for all formulas $\phi \in {\mathcal L}^{ck}$,
$(\Model, \CSequence) \models \phi$ iff $(\Model, \CSequenced) \models \phi$.
\end{theorem}
\begin{proof}
This is a direct consequence of the corresponding 
Stuttering Theorem 3 from \cite{AW16} and Note \ref{not:G}.
\end{proof}

\begin{proof}
  We first prove that a gossip protocol may fail to terminate iff it
  can generate a call sequence $\CSequence.\Call$ such that
  $\CSequence$ is pair-epistemically non-redundant and
  $\EVp{\CSequence.\Call} = \EVp{\CSequence}$. 

\NI 
$(\Ra)$ Let $\overline{\CSequence}$ be an infinite sequence of calls
generated by the protocol. There are only finitely many epistemic
pair-views, so some prefix $\CSequence$ of $\overline{\CSequence}$ is
pair-epistemically redundant.  The claim now follows by Lemma \ref{lem:redundant-common}.
\II

\NI
$(\La)$ Suppose that the protocol generates a sequence of calls
$\CSequence.\Call$ such that $\CSequence$ is pair-epistemically
non-redundant and $\EVp{\CSequence.\Call} = \EVp{\CSequence}$.

Let $\phi$ be the guard associated with the call $\Call$, i.e., 
$\phi \to \Call$ is a rule used in the considered protocol.
By assumption $(\Model, \CSequence) \models \phi$,
so by Lemma \ref{lem:ck-ev}
$(\Model, \CSequence.\Call) \models \phi$. 
By the repeated application of the Stuttering Theorem \ref{thm:ck-stuttering}
we get that for all $i \geq 1$, $(\Model, \CSequence.\Call^{i}) \models \phi$.
Consequently, $\CSequence.\Call^\omega$ is an infinite sequence of
calls that can be generated by the protocol. 
\II

The above equivalence shows that determining whether the protocol
always terminates is equivalent to checking that it cannot generate a
call sequence $\CSequence.\Call$ such that $\CSequence$ is
pair-epistemically non-redundant and $\EVp{\CSequence.\Call} = \EVp{\CSequence}$.

But given a call sequence, by the Decidability of Semantics Theorem
\ref{thm:ck-decidability}, it is decidable to determine whether it can be
generated by the protocol and by Lemma \ref{lem:ab-eff} it is decidable
to determine whether a call sequence is pair-epistemically
non-redundant. Further, by Lemma \ref{lem:ck-finite-non-epi-redundant}
there are only finitely many pair-epistemically non-redundant call
sequences, so the claim follows.
\end{proof}

\end{document}